\newcommand{\thanksAll}{\thanks{This work was supported by a research grant (13156) from VILLUM FONDEN.}}
\renewcommand{\L}{\mathsf{L}}
\renewcommand{\vartheta}{\delta}
\newcommand{\Univ}[0]{\mathsf{U}}
\newcommand{\Trunc}[1]{||#1||}
\newcommand{\subst}[2]{[#1/#2]}
\newcommand{\timeless}[1]{{\mathsf{TimeLess}(#1)}}
\newcommand{\force}{\mathsf{force}}
\newcommand{\later}{\triangleright}
\newcommand{\laterAbs}[3]{\latbind{#1}{#2}{#3}}
\newcommand{\fix}{\mathsf{fix}}
\newcommand{\den}[1]{{\llbracket #1 \rrbracket}^\kappa}
\newcommand{\denv}[1]{{\llbracket #1 \rrbracket}_{\mathsf{Val}}^\kappa}
\newcommand\tabs[2]{\lambda (#1\! :\! #2).}
\newcommand\tapp[2][\tickA]{#2\,[#1] }
\newcommand{\tickA}{\alpha}
\newcommand{\tickB}{\beta}
\newcommand\latbind[2]{{\triangleright}\, (#1 \!: \!#2) .}
\newcommand\toksubst[3][\kappa]{\left[#2/#3\right]}
\newcommand{\clocktype}{\mathsf{clock}}
\newcommand{\capp}[2][\kappa]{#2\,[#1]}
\newcommand\hastype[4][]{
#2 \vdash_{#1} #3: #4
}
\newcommand\wfcxt[2][]{#2 \vdash_{#1}}
\newcommand{\defeq}{\mathbin{\overset{\textsf{def}}{=}}}
\newcommand{\I}{\mathbb{I}}
\renewcommand{\phi}{\varphi}
\newcommand{\equi}{\simeq}
\newcommand{\olam}[2]{\olambda #1 . #2}
\newcommand{\olambda}{%
  \mathop{%
    \rlap{$\lambda$}%
    \mkern2mu
    \raisebox{.275ex}{$\lambda$}%
  }%
}
\DeclareMathOperator{\oor}{\mathsf{or}}
\newcommand{\Val}{\mathsf{Val}}
\newcommand{\BS}{\Downarrow_{\Diamond}}
\newcommand{\BSopmay}[1][]{\Downarrow^{#1}_{\Diamond}}
\newcommand{\BSmay}{\Downarrow_{\Diamond}}
\newcommand{\Convmay}{\Downarrow_{\Pmay[]}^\forall}
\newcommand{\BSopmust}[1][]{\Downarrow^{#1}_{\Box}}
\newcommand{\BSmust}{\Downarrow_{\Box}}
\newcommand{\BSmustforall}{\Downarrow_{\Box}^\forall}
\newcommand{\SSrel}[1][\kappa]{\preceq^{#1}}
\newcommand{\SSrelV}[1][\kappa]{\preceq^{#1}_{\mathsf{Val}}}
\newcommand{\appsimmayV}[1][]{\leq^{#1}_{\mathsf{Val}}}
\newcommand{\appsimmust}[1][]{\leq^{#1}_{\Box}}
\newcommand{\appsimmay}[1][]{\leq^{#1}_{\Diamond}}
\newcommand{\dist}{\zeta}
\newcommand{\Prop}{\mathsf{Prop}}
\newcommand{\gMon}{T}
\newcommand{\gLift}{\hat{T}}
\newcommand{\gnow}[1]{\mathsf{pure}\,#1}
\newcommand{\gstep}[1]{\mathsf{step}_{\gMon}\,#1}
\newcommand{\gCup}{\cup}
\newcommand{\maynow}[1]{\mathsf{now}_{\Diamond}\,#1}
\newcommand{\maystep}[1]{\mathsf{step}_{\Diamond}\,#1}
\newcommand{\mustnow}{\mathsf{now}_{\Box}}
\newcommand{\muststep}{\mathsf{step}_{\Box}}
\newcommand{\mustcup}{\cup}
\newcommand{\liftnow}[1]{\mathsf{now}_{\Lift[]}\,#1}
\newcommand{\liftstep}[1]{\mathsf{step}_{\Lift[]}\,#1}
\newcommand{\Lift}[1][\kappa]{\mathsf{L}^{#1}}
\newcommand{\sequencing}{\,\texttt{>{}>=}}
\newcommand{\Lbind}[3]{#1\sequencing_{\Lift[]}\lambda #2 . #3}
\newcommand{\bind}[3]{#1\sequencing\lambda #2 . #3}
\newcommand{\bindnolambda}[2]{#1\sequencing #2}
\newcommand{\Lmult}{\mu_{\Lift[]}}
\newcommand{\LPred}[1][\kappa]{\hat{\Lift[#1]}}
\newcommand{\Lf}{\LPred}
\newcommand{\eval}{\mathsf{eval}}
\newcommand{\D}[1][\kappa]{D^{#1}}
\newcommand{\SVal}[1][\kappa]{\mathsf{SVal}^{#1}}
\newcommand{\SApp}[2]{#1\cdot #2}
\newcommand{\Sconv}[1][\kappa]{\downarrow^{#1}}
\newcommand{\SconvMay}[1][\kappa]{\downarrow^{#1}_{\Diamond}}
\newcommand{\Pfin}{{\mathsf{P}_{\mathsf{f}}}}
\newcommand{\Pfinrel}{{\overline{\mathsf{P}}_{\mathsf{f}}}}
\newcommand{\Pmay}[1][\kappa]{{\mathsf{P}_{\Diamond}^{#1}}}
\newcommand{\PmayPred}[1][\kappa]{\hat{\mathsf{P}}_{\Diamond}^{#1}}
\newcommand{\Pmust}[1][\kappa]{\mathsf{P}_{\Box}^{#1}}
\newcommand{\inl}{\mathsf{inl}}
\newcommand{\inr}{\mathsf{inr}}
\newcommand{\hmeet}{\land}
\newcommand{\hvee}{\lor}
\newcommand{\isProp}{\mathsf{isProp}}
\newcommand{\isSet}{\mathsf{isSet}}
\newtheorem{theorem}{Theorem}
\newtheorem{lemma}{Lemma}
\newtheorem{proposition}{Proposition}
\newtheorem{definition}{Definition}
\newtheorem{corollary}{Corollary}
\begin{document}
\title{Two Guarded Recursive Powerdomains for Applicative Simulation}

\author{Rasmus Ejlers M{\o}gelberg\thanksAll
  \institute{Department of Computer Science\\ IT University of Copenhagen \\ Denmark} 
  \email{mogel@itu.dk}
\and
Andrea Vezzosi\footnotemark[1]
  \institute{Department of Computer Science\\ IT University of Copenhagen \\ Denmark} 
  \email{avez@itu.dk}
}
%% \author{Andrea Vezzosi\thanksref{ALL}\thanksref{coemail}}
%%   \address{Department of Computer Science\\ IT University of Copenhagen \\ Denmark} 
%% \thanks[ALL]{This work was supported by a research grant (13156) from VILLUM FONDEN.} 
%% \thanks[myemail]{Email:
%%     \href{mailto:mogel@itu.dk} {\texttt{\normalshape
%%         mogel@itu.dk}}} 
%% \thanks[coemail]{Email:
%%     \href{mailto:avez@itu.dk} {\texttt{\normalshape
%%         avez@itu.dk}}}

\maketitle

\begin{abstract} 
Clocked Cubical Type Theory is a new type theory combining the power of guarded
recursion with univalence and higher inductive types (HITs). This type theory
can be used as a metalanguage for \emph{synthetic guarded domain theory} in which one 
can solve guarded recursive type equations, also with negative variable occurrences,
and use these to construct models for reasoning about programming languages. 
Combining this with HITs allows for the use of type constructors familiar from set-theory 
based approaches to semantics, such as quotients and finite powersets in these models.

In this paper we show how to reason about the combination of finite non-determinism and 
recursion in this type theory. Unlike traditional domain theory which takes an ordering of programs as
primitive, synthetic guarded domain theory takes the notion of computation step as primitive
in the form of a modal operator. We use this extra intensional information to define 
two guarded recursive (finite) powerdomain constructions differing in the way 
non-determinism interacts with the computation steps. As an example application of these
we show how to prove applicative similarity a congruence 
in the cases of may- and must-convergence for the untyped lambda calculus with finite 
non-determinism. Such results are usually proved using operational
reasoning and Howe's method. Here we use an adaptation of a denotational method 
developed by Pitts in the context of domain theory.
\end{abstract}
%% \begin{keyword}
%%   Type Theory, Guarded Recursion, Cubical Type Theory, Nondeterminism, Powerdomains, Applicative Similarity
%% \end{keyword}

\section{Introduction}
\label{sec:intro}

%Denotational semantics assigns mathematical meaning to programming
%languages and programs, abstracting from the operational details of their implementation. 
%This can be used to inform the design of languages as well as a foundation for 
%proving properties of languages and programs. Since such reasoning often requires 
%machine assistance, it is desirable that such semantics can be represented in 
%the type theories that underlie many proof assistants in use today. Unfortunately,
%this is difficult for a number of reasons. One is that domain theory~\cite{domain:handbook},
%the main mathematical tool for modelling recursion is based on classical logic. For example, 
%the lifting monad maps a domain to its disjoint union with a singleton set representing divergence.
%Assigning meaning to programs using this provides a way of deciding if a program terminates by
%checking it its denotation is divergence. Another reason is that many constructions used in
%denotational semantics, such as powersets and quotients are hard to represent 
%in type theory. Finally, much reasoning in semantics is coinductive, and doing this 
%in type theory is often associated with an overhead for convincing the productivity checker
%that definitions are productive. 

Over the past 20 years, step-indexing techniques~\cite{Appel:M01} have become one of the most used tools for
constructing operational models of programming languages with combinations
of advanced features such as recursive types, polymorphism, concurrency and non-determinism. 
Often such models are beyond the scope of traditional domain theoretic 
techniques, and also have the additional benefit of being % simpler and 
more elementary. Guarded recursion is an abstract form of step-indexing, in which the explicit
steps are replaced by abstract computation steps in the form of a delay modality $\later$. This 
relieves the user of the book-keeping involved in explicit step-indexing and reveals the underlying 
structure that makes these models work in the form of an introduction $X \to \later X$, a guarded fixed point 
combinator of type $(\later X \to X) \to X$ and solutions to guarded recursive domain equations. In its multiclocked
version, where the delay modality $\later^\kappa$ is indexed by a clock $\kappa$ 
and clocks can be universally quantified, 
guarded recursion can moreover be used to encode coinductive types in type theory, allowing 
productivity requirements on these to be encoded in types~\cite{atkey13icfp}. 

Clocked Cubical Type Theory (CCTT)~\cite{kristensen2021model} is a type theory combining 
multiclocked guarded recursion 
with features from cubical type theory, in particular univalence and higher inductive types (HITs). The latter are 
a form of inductive types defined not only by constructors, but also by equations. HITs have been used to 
construct topological spaces such as the circle and the torus in type theory, but can also be used for free structures, 
such as the free group on a set. In computer science, free structures can be used to form the monads
generated by algebraic theories. For example, the finite powerset monad, often used to model finite non-determinism,
can be generated by a binary union operation plus axioms of associativity, commutativity, and idempotency, and can
therefore be naturally represented as a HIT~\cite{BasoldGW17,FGGW18}. 

Combining HITs with guarded recursion provides a powerful metatheory in which one can reason about 
programming languages and programs. This paper presents a worked example of this. We study the untyped 
lambda calculus with finite non-determinism, and show how to construct a model of this in CCTT using a guarded
recursive type. The model construction takes as parameter a monad $\gMon$ with a 
union operation $\gCup : \gMon X \times \gMon X \to \gMon X$ for modelling non-determinism, 
as well as a step operation $\gstep : \later^\kappa (\gMon X) \to \gMon X$, which in combination with the fixed point
operator allows us to model recursion. 

We present two instantiations for $\gMon$, corresponding to two different notions of observation 
on non-deterministic programs. The first describes a notion of non-deterministic computation where all
possible branches of a computation are executed in parallel and we can observe all possible values that 
occur along the way, even if there are diverging branches of the computation. This monad corresponds to 
may-convergence and can be characterised as being generated by the operations $\gCup$  and $\gstep$
with no equations between them. 

The second instantiation corresponds to a notion of computation where all branches are evaluated in parallel, but 
partial results are only available when all branches have terminated. This is simply the composition $\Lift \Pfin$ 
of the free monad $\Lift$ generated by the step operation and the finite powerset monad $\Pfin$. It turns out that 
this composition is not itself a monad, but does have a sequencing operation sufficient for the purposes of this 
paper. We claim that this composition corresponds to must-convergence.

As an example application of this model, and to substantiate the claim that these constructors correspond 
to may- and must-convergence, respectively, we apply the model to a classical problem in lambda calculus,
namely that of proving that applicative similarity is a congruence. This was first proved by 
Abramsky~\cite{abramsky1990lazy} for the 
lazy lambda calculus (without non-determinism) using domain theory and Stone duality, and this method
has since been extended to calculi with non-determinism~\cite{ong1993non}. Here we use a different method
due to Pitts~\cite{pitts1997note} who used a domain theoretic model and a relation between syntax and 
semantics. We extend this proof to non-determinism for both may- and must-equivalence, and adapt it 
to guarded recursion. 

The two instantiations of $\gMon$ mentioned above are what we consider as examples of guarded recursive 
powerdomains. In classical domain theory~\cite{domain:handbook} a powerdomain is a domain theoretic 
correspondent to the powerset construction in set theory. A number of different powerdomains exist, each 
characterising a different notion of observation. One way of characterising the difference between these is 
in terms of enriched algebraic theories which allow them to be classified in terms of inequations such as 
$x \leq x \cup y$. In the case of guarded recursion the difference between the two guarded powerdomains
studied here can be expressed in terms of equalities describing the interaction between $\gCup$ and $\gstep$. 
This paper can therefore also be read as a first study of the interaction of algebraic effects and guarded recursion.

%The main contribution of this paper, however, is as an illustration of the power of the combination of 
%multiclocked guarded recursion with cubical features in type theory. \dots

\subsection{Synthetic guarded domain theory}

Until now most applications of guarded recursion and step-indexing have used these for 
operational reasoning for programming languages. 
%In particular, the Iris framework for Coq REF
%uses guarded recursion to express higher-order separation logics also for concurrent programs. 
While these techniques are very useful for proving properties of programs, we believe that 
there is a need for also developing guarded recursion as a tool for constructing denotational semantics. 
Denotational methods have the benefit of often being more modular than the operational methods,
and often reveal the foundational mathematical building blocks of programming languages. Denotational
semantics often inspire new programming constructions or languages, as exemplified in 
monads~\cite{moggi1991notions}, 
runners for computational effects~\cite{uustalu2015stateful,ahman2020runners}, 
homotopy type theory~\cite{hottbook}, or even guarded recursion itself~\cite{ToT}. 

Using guarded recursion for denotational semantics has several possible benefits over domain theory.
The first is that it appears to be more expressive than domain theory as illustrated by the many uses 
of step-indexing for advanced programming languages. Another is that step-indexing and guarded recursion
by many are considered more elementary tools. A third is that guarded recursion appears to be more 
amenable to effective formalisation in type theory and proof assistants, although some formalisations 
of classical domain theory do exist~\cite{benton2009some,dockins2014formalized}, 
and in particular recent progress on such a formalisation in
HoTT seems promising~\cite{dejong:CSL:2021}. 

Initial steps towards such a \emph{synthetic guarded domain theory} were taken by Birkedal, M{\o}gelberg 
and Paviotti~\cite{GDTT:FPC,paviottiPCF} who showed how to construct models of the programming 
languages PCF and FPC modelling recursion in these as guarded recursion, and proving the adequacy of 
these models entirely in a type theory with guarded recursion. This paper can be viewed as an extension of 
these works to non-determinism. Perhaps the main disadvantage of guarded recursion compared to domain
theory is the intensional nature, allowing the model to distinguish between computations that produce the 
same result in a different number of steps. The present paper shows that using universal quantification over 
clocks allows to localise the steps and to prove properties that do not refer to steps, using the models.

%Previous work~\cite{paviottiPCF} has showed how to construct
%a weak bisimilation relation capturing this extensional 

\subsection{Related work}

Most proofs of applicative similarity being a congruence use operational 
arguments~\cite{lassen1998similarity,lassen1998relational}, in particular Howe's 
method~\cite{howe1989equality}. More recently, an abstract version of Howe's method
has been developed~\cite{dal2017effectful} to handle languages with algebraic effects in 
a uniform way. This method uses domain theory to handle recursion. It would be interesting to 
see if the method described here generalises to a similar uniform method for computational
effects, but this requires first developing a theory of algebraic effects in guarded type theory. 

Step-indexing and guarded recursion based operational techniques have previously been used for languages 
with non-determinism. For example, Schwinghammer et al.~\cite{schwinghammer2013step}
construct an operational model for reasoning about a typed programming language with recursive types, 
polymorphism and non-determinism and use it to prove contextual equivalences of programs. Bizjak
et al.~\cite{Bizjak:TLCA:2014} show how to construct a similar model using guarded 
recursion and topos logic. These works use complex operational techniques including $\top\top$-closure.
Our goal is different, namely to develop a theory of denotational semantics in a type theory with guarded 
recursion.

The above mentioned works on 
non-determinism~\cite{lassen1998similarity,lassen1998relational,schwinghammer2013step,Bizjak:TLCA:2014} 
study countable non-determinism, rather than finite non-determinism. 
This is generally considered a harder problem. For example, this forces the step-indexing used by 
Schwinghammer et al.~\cite{schwinghammer2013step} to 
be transfinite, whereas the underlying model of the Clocked Cubical Type Theory is based on natural number 
step-indexing. Likewise, defining powerdomains in domain theory for countable non-determinism is much harder 
than the finite case~\cite{apt1986countable,di1995uncountable}. We discuss the possibility of extending our 
approach to countable non-determinism in Section~\ref{sec:conclusion}. 

As described in Section~\ref{sec:cctt}, our partiality monad $\Lift$ is strongly related to the 
coinductive partiality monad, and our use of it is similar to previous uses in semantics of 
recursion~\cite{benton2009some,capretta:lifting,danielsson2012operational}. Interaction trees~\cite{xia2019interaction}
are a general data structure combining the coinductive partiality monad with computational effects. Our guarded 
powerdomain monads can perhaps be seen as a form of guarded interaction trees for non-determinism, except 
that the use of HITs allows us to consider these up to an equational theory. 

% in semantics. In fact, universally quantifying the guarded
%recursive lifting monad over clocks gives the coinductive lifting monad as also explained in Section~\ref{sec:cctt}. 
%The coinductive lifting monad has previously been used to model recursion in type theory~\cite{capretta:lifting}, to
%prove correctness of a compiler~\cite{danielsson2012operational}, and
%to implement domain theory in Coq~\cite{benton2009some}. It also 
%forms the basis of interaction trees~\cite{xia2019interaction}. 
%% \rasmus{Veltri?}

\subsection{Overview}

The paper is organised as follows: We first recall the basics of Cubical Type Theory in Section~\ref{sec:ctt}, 
in particular path types and higher 
inductive types. Section~\ref{sec:cctt} then recalls Clocked Cubical Type Theory, 
the extension of Cubical Type Theory 
with multiclocked guarded recursion. Section~\ref{sec:may:pow:dom} defines the guarded powerdomain 
for may-convergence and Section~\ref{sec:app:may:bisim} presents our proof that applicative may-similarity
is a congruence using a denotational model. Section~\ref{sec:must:pow:dom} defines a  
guarded powerdomain for must-convergence, and Section~\ref{sec:app:must:bisim} 
presents our proof that applicative must-similarity is a congruence. We conclude 
in Section~\ref{sec:conclusion}.

%\subsection{Subsection Heading Here}
%Subsection text here.
%
%
%\subsubsection{Subsubsection Heading Here}
%Subsubsection text here.

%Let's cite Nakano~\cite{Nakano:Modality} and Andrea~\cite{GCTT}.

\section{Cubical type theory}
\label{sec:ctt}

Cubical Type Theory (CTT)~\cite{CTT} is a variant of Homotopy Type Theory 
(HoTT)~\cite{hottbook} based on the 
cubical model of the univalence axiom, and specifically designed to compute 
with univalence. It moreover has the benefit of combining more easily with 
guarded recursion than HoTT, which was the reason for using it as a base for 
Clocked Cubical Type Theory as we shall describe in Section~\ref{sec:cctt}. 
Reading this paper does not require deep knowledge of CTT, and this section
recalls the basic notions from CTT and HoTT that we shall need. 
%in the 
%rest of the paper. 

Perhaps the most fundamental difference between CTT and Martin-L{\"o}f type theory
is that the identity type in the latter is replaced in CTT by a type of paths 
$\mathsf{Path}_A(x,y)$ 
between two elements $x, y : A$.
We will often write the path type infix as $x = y$, and say that 
$x$ and $y$ are path equal if there is an element of $x = y$.  Cubical Type Theory
represent paths as maps from an abstract interval type $\I$, with endpoints $0$ and $1$.
In particular a lambda abstraction like $\lambda i. t$ will build a
path of type $t\subst 0i = t\subst 1i$.
This allows more canonical proofs of equality than just
reflexivity, and so to give computational content to principles like
function extensionality and univalence.
Path equality is still substitutive, in the sense that any element of
type $P(x)$ can be transported along a path $x = y$ to construct an
element of $P(y)$. In the following we will not rely on details about the specific
primitives of CTT, which can be found in \cite{CTT}, and
\cite{CubicalAgda} for their incarnation in the Agda proof assistant.

Types can be classified according to the complexity of their
path equality: We say a type is a \emph{mere
  proposition} if any two elements are path equal, and that a type is
a \emph{homotopy set}, or simply a \emph{set}, if its path equality
type is a mere proposition. These predicates can be expressed 
in the type theory as types $\isProp(A)$ and $\isSet(A)$  for a type $A$.
Given any universe of types $\Univ$, we can form a universe of mere
propositions $\Prop$ whose elements are pairs of an element of
$\Univ$ and a proof that it is propositional. For $A : \Prop$ we will
often write $A$ itself rather than its first projection.

Cubical Type Theory also supports Higher Inductive Types (HITs), which allow
to define an inductive type by declaring constructors also for its
path equality, rather than only for its elements.
For example, the propositional truncation $\Trunc A$ of a type $A$ is
defined as a higher inductive type with the following constructors
\begin{align*}
|-| & : A \to \Trunc A & 
\mathsf{squash} & : \Pi (x\,y : \Trunc A).\,x = y
\end{align*}
This defines the least proposition extending $A$
in the sense that any map $f : A \to B$ into a proposition $B$, 
defines a unique map $\overline f : \Trunc{A} \to B$, such that 
$\overline f(|a|) = f(a)$ for all $a$. If $B$ is a set, then such an extension
still exists if $\Pi (x\,y : A). f\,x = f\,y$~\cite{kraus14}.
For any two propositions $A$ and $B$ their conjunction $A \wedge B$ is
given by the cartesian product $A \times B$, their disjunction $A \vee
B$ by truncating their disjoint union $\Trunc{A + B}$, while the true
and false propositions are given by the unit and the empty type.
Univalence implies that any two logically equivalent propositions are
equal, so associativity and commutativity of disjunction and other
such laws hold as path equalities.
Given a predicate $P : A \to \Prop$, universal quantification $\forall
a : A. P\,a$ is given by the dependent function type $\Pi
(a:A). P\,a$, while existential quantification $\exists a : A. P\,a$
is given by truncating the dependent pair type $\Trunc{\Sigma a :
  A. P\,a}$ as generally it will not be propositional otherwise. As
$\Trunc{Q}$ and $Q$ coincide when $Q$ is a proposition, so do $\exists
a : A. P\,a$ and $\Sigma a : A. P\,a$ when $P\,a$ uniquely determines
$a$. % \rasmus{Axiom of unique choice?}

\subsection{Finite Powerset}
\label{sec:pfin}

The finite powerset $\Pfin(A)$~\cite{BasoldGW17,FGGW18} of a type is another example of a higher 
inductive type defined by the following constructors 
\begin{align*}
 \{-\} & : A \to \Pfin(A)\\
 \cup & : \Pfin(A) \to \Pfin(A) \to \Pfin(A)\\
 \mathsf{assoc} & : \Pi (X\,Y\,Z : \Pfin(A)).\,X \cup (Y \cup Z) = (X \cup Y) \cup Z\\
 \mathsf{comm} & : \Pi (X\,Y : \Pfin(A)).\,X \cup Y = Y \cup X \\
 \mathsf{idem} & : \Pi (X : \Pfin(A)).\,X \cup X = X 
\end{align*}
plus two equalities ensuring that $\Pfin(A)$ is a set \cite{kristensen2021model}. Note that we restrict ourselves to 
\emph{non-empty} finite powersets.
We say that a pair of a type $B$ and a binary operation $f : B \to B
\to B$ is a \emph{join-semilattice}\footnote{This is a slight misuse of terminology, since join-semilattices are 
usually assumed to also have a unit} 
if $B$ is a set and $f$ is associative, commutative
and idempotent.
It can then be shown that $\Pfin(A)$ is the free join-semilattice generated by $A$, and as such maps
$\Pfin(A) \to B$ which preserve $\cup$ correspond to maps $A \to B$ for any join-semilattice $(B,f)$.
We use this to define the membership predicate $x \in
X$, as $\Prop$ forms a join-semilattice with disjunction. Membership satisfies the
following equations
\begin{align*}
  x \in \{ y \} & = \Trunc{x = y}  & 
  x \in (X \cup Y) & = x \in X \vee x \in Y
\end{align*}
If $X : \Pfin(A)$ we write $\forall a \in X.\,Q(a)$ to mean $\Pi (a :
A).\,a \in X \to Q(a)$ and similarly for $\exists a \in X. Q(a)$.

The finite powerset also supports the structure of a monad, and in
particular, given $f : A \to \Pfin(B)$ we write $\cup_{a \in X} f\,a$
for the bind operation, defined using the free 
join-semilattice structure.  Given $f : A \to B$ we write $\Pfin(f) : \Pfin(A) \to
\Pfin(B)$ for the functor action of the finite powerset, defined as
$\Pfin(f)\,X \defeq \cup_{a \in X} \{f\,a\}$.

Finally, recall~\cite[Lemma~4.1]{mogelbergPOPL2019} that if $f : A \to
B$, $X : \Pfin(A)$, and $b : B$, then 
\begin{equation}
\label{eq:image:powset}
b \in \Pfin(f)\,X \simeq \exists a \in X. f\,a = b
\end{equation}

\section{Clocked cubical type theory}
\label{sec:cctt}
\begin{figure}
\begin{mathpar}
  \inferrule*
  {\kappa : \clocktype \in \Gamma}
  {\wfcxt{\Gamma, \tickA : \kappa}{}}
  \and
  \inferrule*
  {\hastype{\Gamma, \timeless{\Gamma'}}{t}{\latbind{\tickA}{\kappa} A}\\ \wfcxt{\Gamma,\tickB:\kappa,\Gamma'}}
  {\hastype{\Gamma,\tickB: \kappa,\Gamma'}{\tapp[\tickB] t}{A\toksubst{\tickB}{\tickA}}}
  \and
  \inferrule*
  {\hastype{\Gamma,\tickA:\kappa}{t}{A}}
  {\hastype{\Gamma}{\tabs{\tickA}{\kappa} t}{\latbind{\tickA}{\kappa} A}}
  \and
    \inferrule*
  {\hastype{\Gamma,\kappa : \clocktype}{t}{A}}
  {\hastype{\Gamma}{\Lambda\kappa. t}{\forall \kappa . A}}
  \and
  \inferrule*
  {\hastype{\Gamma}{t}{\forall \kappa . A}\\
    \hastype\Gamma{\kappa'}\clocktype}
  {\hastype{\Gamma}{t [\kappa']}{A \subst{\kappa'}{\kappa}}}
\end{mathpar}
\caption{Selected typing rules for Clocked Cubical Type Theory \cite{kristensen2021model}.
  The telescope $\timeless{\Gamma'}$ is
  composed of the timeless assumptions in $\Gamma'$, i.e. interval variables
  and faces (as in Cubical Type Theory) as well as clock variables.}
\label{fig:later:typing}
\end{figure}

Clocked Cubical Type Theory \cite{kristensen2021model} extends Cubical Type Theory
with the constructions of Clocked Type
Theory~\cite{bahr2017clocks,clottmodel}, a type theory with Nakano
style guarded recursion, multiple clocks and ticks. This section recalls each of these
concepts, but in a simplified form, omitting constructions related to 
tick irrelevance.

The fundamental notion in guarded recursion is that of a time step on a clock. 
Clocks are introduced as assumptions of the form $\kappa : \clocktype$ in the context,
and time steps are represented as tick assumptions of the form $\tickA : \kappa$.
The type $\latbind\tickA\kappa A$ classifies computations that in the next time step
(as represented by the tick $\tickA$) return elements of $A$. When $\tickA$ does 
not appear in $A$ we simply write $\later^\kappa A$ for this type.
%Tick statements in a context intuitively separate a judgement into a part that happens before 
%the time step and a part that happens after. For example, the 
%introduction rule for $\latbind\tickA\kappa A$ (Figure~\ref{fig:later:typing}) states 
%that if $t$ has type $A$ one time step after $\Gamma$, then $\latbind\tickA \kappa t$ has 
%The delay type former $\later^\kappa A$ is indexed by a clock $\kappa$
%and so stands for the type of elements of $A$ which are available in
%the next time step on clock $\kappa$.
Elements of $\latbind\tickA\kappa A$ are introduced by \emph{tick}
abstraction $\tabs \tickA \kappa t$ and eliminated by tick application
$\tapp t $. The rules are similar to those for function types, except that
tick application requires the eliminated term $t$ not to depend on $\tickA$, nor any
of the variables bound before $\tickA$. This rules out terms like $\lambda
x. \tabs\tickA\kappa{\tapp {\tapp x}} : \later^\kappa
\later^\kappa A \to \later^\kappa A$, which collapse two steps
into one. Interval variables are considered timeless 
and therefore exempt from this restriction, which is necessary to
prove that tick application preserves equalities:
%a version of function extensionality for delay
\begin{align} \label{eq:delay:extensionality}
(\lambda p.\, \tabs\tickA\kappa{\lambda i.\, (\tapp {p\,i})}) : x =_{\latbind{\tickA\,}{\,\kappa} A} y  \to
(\laterAbs\tickA\kappa{ \tapp x =_A \tapp y}) 
\end{align}
In fact, the above map is an equivalence of types, and this extensionality 
principle is one of the main reasons CTT is used rather than
HoTT. One consequence is that $\later$ preserves 
truncation levels. In particular, if $\latbind\tickA\kappa{\isProp(\tapp A)}$ then also 
$\isProp(\latbind\tickA\kappa(\tapp A))$~\cite[Lemma~3.1]{mogelbergPOPL2019} 
and similarly for sets. 

%where $\laterAbs\tickA\kappa A$ is the version of the delay type
%former that binds a tick for the type itself. 
The delay type allows to safely introduce a fixpoint combinator 
$\fix^\kappa$ of type $(\later^\kappa A \to A) \to A$ and
satisfying the path equality $\fix^\kappa\,t = t\,(\tabs\tickA\kappa
{\fix^\kappa t})$ for any $t$.
We can use $\fix^\kappa$ to define \emph{guarded recursive types}, i.e. ones 
where the recursive occurrences are guarded by $\later^\kappa$. 
An example is the partiality monad mapping $A : \Univ$ to 
%\[
%\L^\kappa A = A + \later^\kappa\, {\L^\kappa A}
%\]
$\L^\kappa\, A \defeq \fix^\kappa (\lambda (X :
\later^\kappa \Univ).\, A + \laterAbs\tickA\kappa {\tapp X})$.
%, making
%use of the fact that the recursive occurrence is under $\later^\kappa$. 
The path equality between this type and its unfolding gives 
rise to a type equivalence
\[
 \L^\kappa A \equi A + \later^\kappa\L^\kappa A
\]
We use $\liftnow{}$ and $\liftstep{}$ to denote the two inclusions into $\L^\kappa\, A$ up to the
above equivalence.
\begin{align*}
\liftnow{} & : A \to \L^\kappa A & 
\liftstep{} & : \later^\kappa {\L^\kappa A} \to \L^\kappa A
\end{align*}
An element of $\L^\kappa A$ represents a possibly non-terminating computation of an 
element of $A$. For example, the element 
$\perp = \fix^\kappa (\liftstep{})$ represents divergence.
We say that $(A,\delta)$ is a \emph{delay algebra} if $\delta$ has
type $\later^\kappa A \to A$. The pair $(\L^\kappa
A,\liftstep{})$ is the free delay algebra generated by $A$ in the sense that any 
map $f : A \to B$ where $(B,\delta)$ is a delay algebra defines a unique 
$\overline{f} : \L^\kappa A \to B$ such that
\begin{align*}
  \overline{f}(\liftnow{a}) &= f\,a & \overline{f}(\liftstep{x}) &= \delta (\tabs\tickA\kappa {\overline{f}(\tapp x)})
\end{align*}
The monad structure is then defined with $\liftnow{}$ as the unit, and multiplication 
$\mu_\L : \L^\kappa \L^\kappa A \to \L^\kappa A$ defined as the unique (delay algebra)-homomorphism 
extending the identity.

%% \rasmus{This uses $\later$ contractive. Show how to program with this e.g., non-termination, 
%% introduce notion of delay algebra, state this is the free such on a type $A$
%% and use this to derive monad structure. A delay algebra is a type $A$ with a map $\later^\kappa A \to A$,
%% show how ticks can be used, e.g. dependent applicative action} 
%\andrea{TODO: dependent applicative action example?}

The \emph{clock quantification} type former $\forall \kappa. A$ 
is introduced by clock abstraction, and eliminated by application to a clock. 
It behaves much like a $\Pi$-type, except that $\clocktype$ is not a type, but has 
a similar status to the interval $\I$. 
Clock quantification localises guarded recursion on a clock, and in particular 
supports a map $\force : \forall \kappa. \later^\kappa A \to \forall
\kappa. A$, inverse to $\lambda x.\,\lambda \kappa. \tabs
\tickA \kappa {\capp x}$, allowing to safely eliminate $\later^\kappa$.
%We say that a type $A$ is \emph{clock irrelevant} if the canonical map
%$A \to \forall \kappa. A$ is an equivalence.

The main use case for clock quantification is to encode coinductive
types. For example $\forall \kappa. \L^\kappa A$ is the
final coalgebra for the functor $F(X) = A + X$, if $A$ is \emph{clock irrelevant},
i.e., if the canonical map $A \to \forall \kappa. A$ is an equivalence. 
The notion of clock irrelevance is closed under all basic type formers as
well as inductive types and (under certain restrictions~\cite{kristensen2021model}) 
higher inductive types. In particular, the inductive types used in this paper to 
represent syntax are all clock irrelevant. This encoding of coinductive
types is originally due to Atkey and McBride~\cite{atkey13icfp} and 
presumes the existence of a clock constant $\kappa_0$, which we achieve by 
just initially assuming $\kappa_0 : \clocktype$.

%assuming $A$ is clock
%irrelevant, so that we have $\forall \kappa. \L^\kappa A \simeq A +
%\forall \kappa. \L^\kappa A$.  
More generally for every (indexed)
functor $F$ which commutes with clock quantification we have that
$\forall \kappa. \fix^\kappa (\lambda X.\,
F(\laterAbs\tickA\kappa{\tapp X}))$ is the final coalgebra of $F$,
i.e. its coinductive fixpoint \cite{kristensen2021model}. 
The collection of functors commuting with clock quantification is closed under a
long list of constructors including truncations, finite powersets and sum types
as expressed in the following type equivalences.
%From the results from Kristensen et al. \cite{kristensen2021model} we
%also have the following equivalences, which we will use in the rest of
%the paper.
\begin{align*}
  \forall \kappa. \Trunc A & \simeq \Trunc {\forall \kappa. A}
  & \forall \kappa. \Pfin(A) & \simeq \Pfin(\forall \kappa. A)
  & \forall \kappa. (A + B) & \simeq \forall \kappa. A + \forall \kappa. B
\end{align*}

\section{A powerdomain for may-convergence}
\label{sec:may:pow:dom}

Define the may powerdomain as the unique solution to the guarded recursive equation
\[ \Pmay (A) \equi \Pfin(A + \later^\kappa \Pmay(A)) \]
Formally, $\Pmay$ can be defined as a fixed point $\fix^\kappa (\lambda (X :
\later^\kappa \Univ).\, \Pfin(A + \laterAbs\tickA\kappa {\tapp X}))$ similarly to the definition
of $\L^\kappa A$, but in the rest of this paper we will not give such definitions explicitly. 
%Note that $\Pmay (A)$ is defined relative to some clock $\kappa$, but we leave that implicit in the 
%syntax. 
The type constructor $\Pmay$ comes equipped with the following operations
\begin{align*}
 \cup & : \Pmay(A) \to \Pmay(A) \to \Pmay(A) &
 \maynow & : A \to \Pmay(A)  &
 \maystep & : \later^\kappa \Pmay(A) \to \Pmay(A)
\end{align*}
where $\cup$ is inherited from $\Pfin$ and therefore defines a join-semilattice and 
\begin{align*}
 \maynow(a) & = \{ \inl (a)\} & 
 \maystep(a) & = \{ \inr (a)\}
\end{align*}
defines a unit and a delay algebra structure. In particular, this means that 
$\Pmay(A)$ can represent diverging computations ($\bot = \fix^\kappa(\maystep\!)$), 
as well as values. An element of $\Pmay(A)$ may also have both converging and 
diverging branches, as for example $\{a, \bot\}$.

The next lemma states that for a set $A$,  $\Pmay(A)$ is the free algebra for the theory
combining delay and union with no interaction between the two.
\begin{lemma} \label{lem:pmay:universal:prop}
 Let $A$ be a set, let $B$ be a set with both a join-semilattice structure and a delay algebra structure,
 and let $f : A \to B$. Then there is a unique map $\Pmay(A) \to B$ extending $f$ 
 and commuting with the join-semilattice and delay-algebra structures. 
\end{lemma}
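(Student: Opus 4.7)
The plan is to define $\overline{f}$ by guarded fixed point, exploiting the universal property of $\Pfin$ at each stage, and then to establish uniqueness by combining $\Pfin$-induction with Löb induction.

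For existence, write the join-semilattice structure on $B$ as $\vee$ and the delay algebra as $\delta : \later^\kappa B \to B$. By guarded recursion, assume we are given $\widehat{f} : \later^\kappa(\Pmay(A) \to B)$; I define $g : A + \later^\kappa \Pmay(A) \to B$ by cases with $g(\inl a) \defeq f(a)$ and $g(\inr x) \defeq \delta(\tabs\tickA\kappa (\tapp \widehat{f})(\tapp x))$. Since $B$ is a set and $(B,\vee)$ is a join-semilattice, the universal property of $\Pfin$ recalled in Section~\ref{sec:pfin} extends $g$ uniquely to a $\cup$-preserving map out of $\Pfin(A + \later^\kappa \Pmay(A))$, and composing with the equivalence $\Pmay(A) \simeq \Pfin(A + \later^\kappa \Pmay(A))$ produces a map $\Pmay(A) \to B$. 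Taking $\overline{f}$ to be the guarded fixed point of this operator, the fixed point equation replaces $\widehat{f}$ by $\tabs\tickA\kappa \overline{f}$ inside the body, so that commutation with $\maynow$, $\maystep$, and $\cup$ follows by construction.

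For uniqueness, suppose $h : \Pmay(A) \to B$ also extends $f$ and commutes with both structures. I claim $\forall X : \Pmay(A).\,h(X) = \overline{f}(X)$. Because $B$ is a set this is a proposition, so I prove it by Löb induction: assuming $\later^\kappa(\forall X.\,h(X) = \overline{f}(X))$, I verify pointwise equality by induction on the underlying $\Pfin$-structure of $X$. Since $B$ is a set, all higher path constructors of $\Pfin$ (associativity, commutativity, idempotency, and the set-truncation equalities) impose only propositional coherences that are discharged automatically. On the generator $\{\inl a\}$ both sides equal $f(a)$; on the generator $\{\inr x\}$ commutation with $\maystep$ reduces the question to showing $\tabs\tickA\kappa h(\tapp x) = \tabs\tickA\kappa \overline{f}(\tapp x)$, which follows from the Löb hypothesis via the $\later$-extensionality principle of equation~(\ref{eq:delay:extensionality}); the binary union case is routine using the two induction hypotheses and commutation with $\cup$.

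The main obstacle is the $\maystep$ case of the uniqueness argument, where guarded recursion and HIT-induction genuinely interact: one must transport the Löb hypothesis under a tick abstraction inside $\delta$, and this step relies crucially on $\later$-extensionality together with $B$ being a set, so that the proposition $h(X) = \overline{f}(X)$ can be threaded through the $\Pfin$-eliminator without any further coherence obligations.
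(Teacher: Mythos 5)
Your proof is correct and follows exactly the expected route: existence via a guarded fixed point of the operator obtained from the free join-semilattice property of $\Pfin$ applied to $A + \later^\kappa \Pmay(A)$, and uniqueness by L\"ob induction combined with the propositional eliminator of the HIT $\Pfin$ (path and set-truncation constructors being automatic since $B$ is a set), with the $\maystep$ case handled through the extensionality equivalence~(\ref{eq:delay:extensionality}). This matches the intended argument for Lemma~\ref{lem:pmay:universal:prop}, which mirrors the free delay-algebra property of $\L^\kappa$, so there is nothing to correct.
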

In terms of algebraic theories, $\Pmay$ can therefore be seen as generated by the theories 
of join-semilattices and delay-algebras with no operations between them. As a special case of 
the lemma one can define a bind operation mapping $a : \Pmay(A)$ and 
$f : A \to \Pmay B$ to $\bindnolambda af: \Pmay B$ which then equips $\Pmay$ with a monad 
structure with unit given by $\maynow$. Bind moreover commutes with these operations as in the 
following equations
\begin{align}
(\bindnolambda {\maystep a}f) 
& = \maystep {\tabs\tickA\kappa{(\bindnolambda{\tapp a}f)}} \label{eq:bind:comm:step} \\
(\bindnolambda {(a \cup b)}f) 
& = (\bindnolambda {a}f) \cup (\bindnolambda {b}f) \label{eq:bind:comm:cup}
\end{align}

\section{Applicative may-simulation}
\label{sec:app:may:bisim}

The type constructor $\Pmay$ should be seen as a guarded recursive 
powerdomain for may-convergence. Intuitively, this is true because an element
of $\Pmay(A)$ describes a set of values (of type $A$) that the computation
has returned now, and a set of computations that we can choose to further evaluate,
but may also choose not to, if we are only interested in testing if a program 
may evaluate to a particular value. This should be seen in contrast to the 
must-powerdomain of Section~\ref{sec:must:pow:dom} which will force all branches of 
a computation tree to be evaluated fully before the result values can be inspected. 
In this section we substantiate that claim 
by using $\Pmay$ to prove applicative may-similarity a congruence. 

We start by recalling the untyped lambda calculus with binary
non-determinism and the notion of applicative 
may-similarity. We use informal binding notation here for 
readability, using the grammar
\begin{align*}
 M, N & :: = M \, N \mid \olam x M \mid M \oor N
\end{align*}
for terms. This could for example be implemented more formally as an inductive
family using de Bruijn indices. 
Note the use of $\olambda$ to distinguish it from the meta-level lambda. 
A value is a closed term of the form $\olam x M$, and we shall use 
$\Lambda$ and $\Val$ for the types of closed lambda terms and values 
respectively, which we will assume are sets (and indeed are if formalised using 
de Bruijn indices). These are moreover clock irrelevant, which can be shown by 
embedding them into the inductive types of all (also open) terms, and 
the fact that all inductive types are clock irrelevant~\cite{kristensen2021model}.

We define two operational semantics. The first one is 
a big-step operational semantics formulated as a relation
$\BSmay : \Lambda \times \Val \to \Prop$ defined inductively in the standard way
%\andrea{this has to be an inductive definition, but the $\defeq$ to me read as recursive}
%\rasmus{OK. How should we write it then?}
\begin{mathpar}
  \infer{V = W}{V \BSmay W} \and
  \infer{M \BSmay \olam x M' \\ N \BSmay V' \\ M' \subst{V'}x \BSmay V}{M N \BSmay V} \and
  \infer{M \BSmay V \hvee N \BSmay V}{M \oor N \BSmay V} 
\end{mathpar}
%
%
%\begin{align*}
%  V \BSmay W & \defeq V = W \\
%  M N \BSmay V &\defeq \exists M', V' . M \BSmay \olam x M' \hmeet  N \BSmay V' \hmeet M' \subst{V'}x \BSmay V  \\
%  M \oor N \BSmay V & \defeq M \BSmay V \hvee N \BSmay V
%\end{align*}
One can also similarly define
a small-step operational semantics and prove this equivalent to the 
big-step semantics using standard methods. 

The second operational semantics is less familiar but has the benefit of 
being suitable for guarded recursive reasoning. It uses the monad
$\Pmay$ to model recursion and non-determinism, but since parts of the 
development in this section will be reused later, we will define 
the operational semantics relative to a monad $\gMon$, 
which can be instantiated to be $\Pmay$. We will assume that 
$\gMon$ has operations $\cup$ and $\gstep$ providing $\gMon$ with
a join-semilattice structure as well as a delay-algebra structure satisfying the
equations (\ref{eq:bind:comm:step}) and (\ref{eq:bind:comm:cup}). 
In fact we shall see later that not all axioms for monads are needed 
for our developments. 

%we will assume is a monad. 
%We write $\gnow : A \to \gMon(A)$ for the unit and 
%$\bindnolambda mf$ for the binding operation, applying the monadic
%extension of $f : A \to \gMon(B)$ to $m :\gMon (A)$.
%We will further assume that $\gMon$ has an operation 
%\[
% \gstep : \later^\kappa{\gMon A} \to \gMon A
%\]
%natural in $A$ and commuting with bind in the sense that 
%\[
%(\bindnolambda {\gstep m}f) = \gstep {\tabs\tickA\kappa{(\bindnolambda{\tapp m}f)}}
%\]
%\rasmus{Do we actually need this?}
%The canonical example for $\gMon$ is the free monad on such an operation defined as 
%a guarded recursive type satisfying the equation
%\[
%\Lift A = A + \later^\kappa \Lift A
%\]
%\rasmus{Mention this earlier as example of a gr type}
Using this, we define an evaluation function $\eval : \Lambda \to \gMon (\Val)$ as 
\[
% \eval & : \Lambda \to \gMon (\Val) \\
{\arraycolsep=1.2pt
\begin{array}{l l c l}
  \eval & (\olam x M) &=& \gnow\,(\olam x M)\\
  \eval & (M\, N) &=& \eval\,M \sequencing \lambda (\olam x M').\,
  \eval\,N \sequencing \lambda V.\,
  \gstep (\tabs\tickA\kappa{\eval \, (M' [V/x])}) \\
  \eval & {(M \oor N)} & =& \eval\,M \cup \eval\,N
\end{array}
}
\]
where $\gnow$ refers to the unit of the monad $\gMon$. Note that the match on $(\olam x M')$
in the application case is exhaustive, since values can only be lambda abstractions.
To lift this to a big step relation, we will assume given a lifting of $\gMon$
to predicates as follows.

%\andrea{Here we are assuming $T$ is more than just a monad, it also has $\cup$ and $\gstep$.}
\begin{definition} \label{def:pred:lifting}
 A lifting of a monad $\gMon$ to predicates is a function that 
 maps a predicate $P : A \to \Prop$ to a predicate $\gLift(P) : \gMon (A) \to \Prop$
 satisfying the following properties
 \begin{enumerate}
   \item $\gLift(P)(\gMon (f)(a)) = \gLift(P \circ f)(a)$ \label{item:Lift:comp}
   \item $\gLift(P)(\gnow(a)) = P(a)$  \label{item:Lift:now}
   \item $\gLift(P)(\gstep(a)) = \latbind\tickA\kappa{\gLift(P)(\tapp a)}$ \label{item:Lift:step}
   \item $\gLift(P)(\bind mxt) = \gLift(\lambda x . \gLift(P)(t))(m)$ \label{item:Lift:bind}
   \item $\gLift(P)(a \cup b) = \gLift(P)(a) \hmeet \gLift(P)(b)$ \label{item:Lift:cup}
 \end{enumerate}
% \rasmus{Should we assume functoriality in $Q$?}
% \andrea{Seems like $\gnow(a)$ should be $\pure(a)$ ?}
\end{definition}

Note that a unique such lifting can be defined for $\Pmay$ using 
Lemma~\ref{lem:pmay:universal:prop} since $\Prop$ is a set,
and $\hmeet$ and $\later$ define a join-semilattice and delay-algebra
structure on $\Prop$, so $\gLift(P)$ can be defined
as the extension of $P$.

Specialising to $T = \Pmay$ we make the following definition.
\begin{definition}
 Let $\gLift$ be the lifting of $\Pmay$ to predicates, let $Q : \Val \to \Prop$ be a predicate on
 values and let $M : \Lambda$. Define 
 \[
   M \BSopmay[\kappa] Q  \defeq \gLift Q (\eval (M)) :  \Prop
 \]
\end{definition}
% Note that this is covariant in $Q$. 
One can then define a more standard big step operational semantics as 
\begin{align*}
 M \BSopmay[\kappa] V & \defeq M \BSopmay[\kappa] (\lambda W . (V = W)) 
% &
% M \BSTerm V & \defeq M \BSTerm (\lambda W . (V = W)) &
% M \BSTerm  & \defeq M \BSTerm (\lambda W . 1) 
\end{align*}
Intuitively $M \BSopmay[\kappa] Q$
means that if $M$ terminates to a value, that value will satisfy $Q$. 
In particular, $M \BSopmay[\kappa] V$ will hold for any $V$ if $M$ diverges, 
unlike the statement $M \BSmay V$ which guarantees termination. 

To express the precise relationship between the two semantics, we
introduce a predicate of may-convergence on the powerdomain
$\Pmay$. Since termination cannot be expressed as a predicate on guarded recursive 
types directly, it must be expressed as a predicate on $\forall\kappa. \Pmay(-)$,
which captures global behaviour of $\Pmay$~\cite{GDTT:FPC}. If
$m : \forall \kappa. \Pmay(A)$ and $a : A$ define $m \Convmay a$ 
as the proposition inductively generated by the following introductions
\begin{mathpar}
  \inferrule*{ }{ \lambda \kappa. \maynow{(a)} \Convmay a}
  \and
  \inferrule*{ m \Convmay a }{\lambda \kappa. \maystep{(\tabs\alpha\kappa {\capp m})} \Convmay a}
  \and
  \inferrule*{ m \Convmay a \hvee m' \Convmay a }{\lambda \kappa. \capp m \cup \capp {m'} \Convmay a}
\end{mathpar}
This means that these rules should be read as constructors of a HIT which also has propositional 
truncation operators as a constructor. The relationship between the two semantics can then be
expressed as follows.

\begin{proposition} \label{prop:bs:may:equiv}
  The statements $M \BSmay V$ and $(\lambda \kappa. \eval (M)) \Convmay V$ are logically equivalent. 
\end{proposition}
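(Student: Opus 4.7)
The plan is to prove the two directions of the equivalence separately, each by induction, with two technical lemmas about how $\Convmay$ interacts with the bind of $\Pmay$ carrying most of the weight.

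For the forward direction I would proceed by induction on the derivation of $M \BSmay V$. The value case follows from $\eval V = \maynow V$ together with the first constructor of $\Convmay$, and the union case $M \oor N \BSmay V$ from the identity $\eval(M \oor N) = \eval M \cup \eval N$ together with the third constructor of $\Convmay$. The application case will need a bind-compatibility lemma: if $(\lambda \kappa. m_\kappa) \Convmay a$ and $(\lambda \kappa. f_\kappa\, a) \Convmay b$, then $(\lambda \kappa. \bind{m_\kappa}{x}{f_\kappa\, x}) \Convmay b$. I would prove this by induction on the first $\Convmay$ derivation, using the monad unit law in the base case and equations \eqref{eq:bind:comm:step} and \eqref{eq:bind:comm:cup} in the $\maystep$ and $\cup$ cases. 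Given this lemma, the application case is settled by chaining the three inductive hypotheses through the two nested binds in $\eval(M N)$ and absorbing the explicit $\gstep$ via the step constructor of $\Convmay$.

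For the reverse direction I would strengthen to the statement: for every $M : \Lambda$, if $(\lambda \kappa. \eval M) \Convmay V$ then $M \BSmay V$. The proof proceeds by induction on the $\Convmay$ derivation combined with case analysis on $M$. Values and $\oor$-terms match directly, since $\eval$ of each has a shape that $\Convmay$ can decompose in one step. The application case $M = N_1 N_2$ relies on a dual bind-inversion lemma: if $(\lambda \kappa. \bind{m_\kappa}{x}{f_\kappa\, x}) \Convmay b$, then there exists $a : \Val$ with $(\lambda \kappa. m_\kappa) \Convmay a$ and $(\lambda \kappa. f_\kappa\, a) \Convmay b$. Applied twice to $\eval(N_1 N_2)$ this yields the three sub-derivations needed by the application rule of $\BSmay$.

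I expect the bind-inversion lemma to be the main obstacle. Whereas $\Convmay$ only inspects the top-level shape ($\maynow$, $\maystep$, or $\cup$) of its argument, a bind can hide its unit underneath many layers of unions and guarded steps, so the inversion has to be threaded carefully through the $\cup$- and $\maystep$-cases of the $\Convmay$ derivation, using \eqref{eq:bind:comm:step} and \eqref{eq:bind:comm:cup} each time to align the bind with the head of the inductive derivation. The guarded-recursion aspect adds a further subtlety in the $\maystep$ case, where the inductive hypothesis is available only under $\latbind\tickA\kappa{}$ and must be transported back through the delay using the extensionality principle \eqref{eq:delay:extensionality}.
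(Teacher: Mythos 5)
Your forward direction is fine: induction on the derivation of $M \BSmay V$, with a bind-compatibility lemma (if $m \Convmay a$ and $(\lambda\kappa.\,f_\kappa\,a) \Convmay b$ then $(\lambda\kappa.\,\bindnolambda{m\,[\kappa]}{f_\kappa}) \Convmay b$) proved by induction on the $\Convmay$ derivation using \eqref{eq:bind:comm:step} and \eqref{eq:bind:comm:cup}; this goes through essentially as you describe. The gap is in the reverse direction, and it is exactly where you suspected, but for a different reason than the one you name. Your bind-inversion lemma cannot be proved by ``aligning the bind with the head of the derivation'': the constructors of $\Convmay$ only constrain the global index up to equality in $\forall\kappa.\,\Pmay(\Val)$, whose underlying type $\Pfin(\Val + \later^\kappa\Pmay(\Val))$ is a HIT, so the decomposition recorded by a derivation need not match the bind decomposition at all. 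Concretely, in the $\cup$-case you learn only that $\lambda\kappa.\,\bindnolambda{m\,[\kappa]}{f_\kappa}$ equals $\lambda\kappa.\,\capp{p} \cup \capp{q}$ as sets, with a sub-derivation for $p$ or $q$; these pieces are not themselves of bind shape, so your inductive hypothesis does not apply, and equations \eqref{eq:bind:comm:step}--\eqref{eq:bind:comm:cup} are of no help since they rewrite according to the (unknown) head of $m$, not of the composite. Even in the $\maynow$ and $\maystep$ cases one must invert an equation between elements of $\Pfin$ (via membership reasoning) and then extract a single intermediate value $a$ that is \emph{uniform in $\kappa$}, which needs clock irrelevance of $\Val$, commutation of $\Pfin$ and sums with $\forall\kappa$, and $\force$ --- none of which your sketch provides. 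The same objection already hits your claim that the value and $\oor$ cases ``match directly'': by idempotence $\eval M \cup \eval N$ may literally equal a $\maynow$, so ``decomposing in one step'' is itself a nontrivial inversion principle for an inductive family indexed over a HIT. (Also, the premise of the $\maystep$ constructor is available now, not under $\later^\kappa$, so delay extensionality \eqref{eq:delay:extensionality} is not the issue there.)

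The robust way to finish the reverse direction is to avoid inverting $\Convmay$ against binds altogether and to route the argument through the predicate lifting, which \emph{computes} on $\maynow$, $\maystep$, $\cup$ and bind by Definition~\ref{def:pred:lifting}. First prove $\forall\kappa.\, M \BSopmay[\kappa] (\lambda W.\, M \BSmay W)$ by structural induction on $M$, using guarded recursion (L\"ob induction) for the application case and items \ref{item:Lift:now}--\ref{item:Lift:cup} of Definition~\ref{def:pred:lifting} to push the lifting through the definition of $\eval$, together with closure of $\BSmay$ under its rules (this also needs monotonicity of $\gLift$ in the predicate, which follows from its construction via Lemma~\ref{lem:pmay:universal:prop}). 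Then instantiate the main statement of Lemma~\ref{lem:convmay:elim} with the clock-irrelevant family $Q(W) \defeq M \BSmay W$ to obtain $\forall V.\, (\lambda\kappa.\,\eval M) \Convmay V \to M \BSmay V$; this is not circular, since that half of Lemma~\ref{lem:convmay:elim} concerns arbitrary $m : \forall\kappa.\,\Pmay(A)$ and is itself provable by a straightforward induction on the $\Convmay$ derivation precisely because $\gLift Q$ evaluates definitionally on $\maynow$, $\maystep$ and $\cup$. All the clock-uniformity and HIT-inversion difficulties your bind-inversion lemma would have to face are concentrated there, in a form that actually admits an induction.
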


Finally, the next lemma makes the intuition for $M \BSopmay[\kappa] Q$ stated above precise.
As the notation used in the lemma suggests, $\kappa$ can appear free in $Q^\kappa$.

\begin{lemma}\label{lem:convmay:elim}
  Let $A$ be clock irrelevant, $Q^\kappa$ a family over $A$ for each $\kappa$, and $m : \forall \kappa.\, \Pmay(A)$.
  The statements $\forall\kappa . \PmayPred Q^\kappa {(\capp m)}$ and $\forall a.\, m \Convmay a \to \forall\kappa . Q^\kappa(a)$ are logically equivalent. 
%  Moreover, $m \Convmay a \to Q^\kappa(a)$ implies
%  $\PmayPred Q^\kappa (\capp m)$.
 As a consequence ${\forall\kappa . M\BSopmay[\kappa] Q^\kappa}$  is equivalent to 
 $M \BSmay V \to \forall\kappa . Q^\kappa(V)$.
\end{lemma}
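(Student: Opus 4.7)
The plan is to prove the biconditional in two implications and then derive the consequence. For the forward direction, assume $\forall\kappa.\,\PmayPred Q^\kappa(\capp m)$ and induct on the derivation of $m \Convmay a$. The three introduction rules of $\Convmay$ match the three characterising clauses of $\PmayPred$ from Definition~\ref{def:pred:lifting}: the value rule uses clause~\ref{item:Lift:now} directly; the step rule uses clause~\ref{item:Lift:step} to rewrite the hypothesis as $\forall\kappa.\,\latbind\alpha\kappa{\PmayPred Q^\kappa(\capp{m'})}$, which by the $\force$ equivalence (applicable even when the type family depends on $\kappa$) becomes $\forall\kappa.\,\PmayPred Q^\kappa(\capp{m'})$, to which the induction hypothesis applies; and the union rule uses clause~\ref{item:Lift:cup} to split the hypothesis into a conjunction, after which the disjunctive induction hypothesis is eliminated into the propositional goal.

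For the backward direction, the first step is to derive the equivalence
\[ \forall\kappa.\,\Pmay(A) \simeq \Pfin(A + \forall\kappa.\,\Pmay(A)) \]
by combining the guarded unfolding of $\Pmay$ with the commutation equivalences of Section~\ref{sec:cctt} for $\forall\kappa$ with $\Pfin$ and with sums, with clock irrelevance of $A$, and with the force equivalence. Under this, $m$ corresponds to a $\Pfin$-element whose inhabitants are either values $\inl(a)$ or further $\forall\kappa$-polymorphic delayed computations $\inr(y)$. I would then combine guarded recursion on $\kappa$ with induction on the outer $\Pfin$-structure. For a leaf $\{\inl(a)\}$, the value rule gives $m \Convmay a$, so the hypothesis yields $\forall\kappa.\,Q^\kappa(a) = \forall\kappa.\,\PmayPred Q^\kappa(\maynow a)$. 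For $\{\inr(y)\}$, the step rule shows the hypothesis on $m$ descends to $y$, and the guarded recursion hypothesis applied to $y$ yields $\later^\kappa \PmayPred Q^\kappa(\capp y)$, which reassembles to the goal via clause~\ref{item:Lift:step}. Unions use the union rule to descend the hypothesis to both branches, and clause~\ref{item:Lift:cup} to combine the two induction hypotheses; the path-constructors of $\Pfin$ (associativity, commutativity, idempotency) hold automatically since the goal is propositional.

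The consequence then follows by instantiating $m := \lambda\kappa.\,\eval(M)$: the left-hand side becomes $\forall \kappa.\,M \BSopmay[\kappa] Q^\kappa$ by definition, and Proposition~\ref{prop:bs:may:equiv} identifies $(\lambda\kappa.\,\eval(M)) \Convmay V$ with $M \BSmay V$. I expect the backward direction to be the main obstacle, because a pointwise case analysis on $\capp m$ at a single fixed clock does not give access to the rules of $\Convmay$, which constrain $m$ as a $\forall\kappa$-polymorphic object; the clock-commutation isomorphism is the ingredient that lets us uniformly decompose $m$ into a tree whose subtrees remain $\forall\kappa$-polymorphic and to which the rules of $\Convmay$ apply.
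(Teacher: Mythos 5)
Your proof is correct and follows essentially the same route as the paper's: the forward direction by induction on the derivation of $\Convmay$ using clauses~\ref{item:Lift:now}, \ref{item:Lift:step} (plus $\force$) and \ref{item:Lift:cup}, the backward direction by guarded recursion at a fixed $\kappa$ (generalised in $m$ and in the hypothesis) combined with the clock-commutation unfolding $\forall\kappa.\Pmay(A)\simeq\Pfin(A+\forall\kappa.\Pmay(A))$ and $\Pfin$-induction into the propositional goal, and the consequence via Proposition~\ref{prop:bs:may:equiv}. Your closing remark correctly identifies the key point, namely that the decomposition must keep the subcomputations $\forall\kappa$-polymorphic so that the rules of $\Convmay$ remain applicable.
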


%
%\rasmus{Termination predicate on $\forall\kappa . \Pmay$, etc}
%
%The precise relationship between the two big-step semantics 
%is given by the following lemma.
%
%\begin{lemma} \label{lem:elim:kappa:BS:may}
% Let $Q^\kappa$ be a family of propositions.
% Then ${\forall\kappa . M\BSopmay[\kappa] Q^\kappa}$  implies $M \BSmay V \to \forall\kappa . Q^\kappa(V)$ 
% and $M \BSmay V \to Q^\kappa(V)$ implies $M\BSopmay[\kappa] Q^\kappa$.   
%\end{lemma}
%
%Note that since $\kappa$ does not appear in $M \BSmay V$ the predicate
%$M \BSmay V \to \forall\kappa . Q^\kappa(V)$ is equivalent to $\forall\kappa . (M \BSmay V \to Q^\kappa(V))$, and 
%therefore the second statement implies the reverse direction of the first. 

\subsection{Applicative may-similarity}
\label{sec:appl:bisim}

We now recall the notion of applicative similarity, as originally studied by Abramsky~\cite{abramsky1990lazy} for the 
pure lambda calculus and adapt it to finite non-determinism in the case of may-convergence. 
Say that a relation $R$ on closed terms is an applicative may-simulation if $M R N$ and $M \BSmay \olam x {M'}$ implies
\[\exists N'.\,N \BSmay \olam y {N'} \hmeet
 (\forall (V : \Val) . M'\subst {V} x \,R\, N'\subst {V} x)
 \]
%\rasmus{Recall encoding of coinductive types somewhere}
Applicative may-similarity is the greatest applicative may-simulation. 
We define this by universally quantifying a clock in a guarded recursive 
definition. First define 
\begin{align*}
 \appsimmayV[\kappa] & : \Val \to\Val \to \Prop \\
 \appsimmay[\kappa] & : \Lambda \to \Lambda \to \Prop \\
 \olam x M \appsimmayV[\kappa] \olam yN & \defeq \latbind\tickA\kappa {(\forall V \! :\! \Val . M\subst Vx \appsimmay[\kappa] N\subst Vy)} \\
 M \appsimmay[\kappa] N & \defeq M \BSopmay[\kappa] \lambda V . (\exists W. N \BSmay W \hmeet V \appsimmayV[\kappa] W)
\end{align*}
The statement $M \appsimmay[\kappa] N$ should be read as stating that if $M$ terminates, then also $N$ 
terminates, and 
moreover, applying the resulting terms to the same value results in the related results later. Note the 
asymmetry in the use of operational semantics: The evaluation of $M$ uses the guarded operational semantics
which ensures that if $M$ diverges, then $M \appsimmay[\kappa] N$ is true. On the other hand, once $M$ converges
to a value, $N$ must also converge, and expressing this requires the inductive operational semantics. The delay 
in the definition of $\appsimmayV[\kappa]$ ensures well-definedness: unfolding the definition of $\appsimmay[\kappa]$
in $\appsimmayV[\kappa]$ gives a guarded recursive definition.

Applicative similarity is extended to open terms by defining 
$M \appsimmay[\kappa] N$ to mean $M\sigma \appsimmay[\kappa] N\sigma$
for all substitutions $\sigma$ mapping all free variables in $M$ and $N$ to values. Finally, we localise the 
steps in the definition of $M \appsimmay[\kappa] N$ by universally quantifying $\kappa$ and thereby pass to
a coinductive type:
\[ M \appsimmay N \defeq \forall\kappa . M \appsimmay[\kappa] N
\]
%By universally quantifying the clock $\kappa$ we pass from a guarded recursive type 
%to a coinductive one. 

\begin{lemma} \label{lem:appbisim:great}
 $\appsimmay$ is the greatest applicative may-simulation. 
% Moreover, $M \appsimmay N$ holds if and only if 
% \[M \BSmay \lambda(\olam x{M'}) \to N \BSmay \olam y{N'} \hmeet 
% (\Pi P . M'\subst Px \appsimmay N'\subst Px)\]
\end{lemma}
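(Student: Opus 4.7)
The plan is to prove both directions: first that $\appsimmay$ is itself an applicative may-simulation, and second that it contains every other such simulation.

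\textbf{Step 1: $\appsimmay$ is an applicative may-simulation.} Suppose $M \appsimmay N$ and $M \BSmay \olam x M'$. Unfolding, $M \appsimmay N$ gives $\forall \kappa.\, M \BSopmay[\kappa] (\lambda V.\, \exists W.\, N \BSmay W \hmeet V \appsimmayV[\kappa] W)$. Applying Lemma~\ref{lem:convmay:elim} with $Q^\kappa(V) = \exists W.\,N \BSmay W \hmeet V \appsimmayV[\kappa] W$, and instantiating at the value $\olam x M'$ provided by $M \BSmay \olam x M'$, we obtain $\forall \kappa.\, \exists W.\, N \BSmay W \hmeet (\olam x M') \appsimmayV[\kappa] W$. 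Now the critical step: commute $\forall\kappa$ inside the existential. Since $\Val$ is clock irrelevant and the existential is propositional truncation, the combination of the equivalences $\forall\kappa.\Trunc{A} \simeq \Trunc{\forall\kappa. A}$ and $\forall\kappa.\Sigma(W\!:\!\Val).P(W,\kappa) \simeq \Sigma(W\!:\!\Val).\forall\kappa.P(W,\kappa)$ yields $\exists W.\,N \BSmay W \hmeet \forall \kappa.\,(\olam x M') \appsimmayV[\kappa] W$. Writing $W = \olam y N'$ and unfolding $\appsimmayV[\kappa]$, the right conjunct becomes $\forall \kappa.\,\latbind\tickA\kappa(\forall V\!:\!\Val.\, M'\subst{V}{x} \appsimmay[\kappa] N'\subst{V}{y})$. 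Applying $\force$ and reordering the quantifiers $\forall\kappa$ and $\forall V$ converts this to $\forall V.\,M'\subst{V}{x} \appsimmay N'\subst{V}{y}$, which is exactly the simulation condition.

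\textbf{Step 2: Every applicative may-simulation is contained in $\appsimmay$.} Let $R$ be a may-simulation. We use guarded recursion via $\fix^\kappa$ to prove $\forall \kappa.\,\forall M N.\, M\, R\, N \to M \appsimmay[\kappa] N$, obtaining an induction hypothesis of type $\later^\kappa (\forall M N.\, M\, R\, N \to M \appsimmay[\kappa] N)$. Fix $\kappa$, assume $M\, R\, N$; we must show $M \BSopmay[\kappa] (\lambda V.\,\exists W.\, N \BSmay W \hmeet V \appsimmayV[\kappa] W)$. By Lemma~\ref{lem:convmay:elim} it suffices to assume $M \BSmay \olam x M'$ and exhibit $W$ with $N \BSmay W$ and $(\olam x M') \appsimmayV[\kappa] W$. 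The simulation property of $R$ provides $N'$ with $N \BSmay \olam y N'$ and $\forall V.\, M'\subst{V}{x}\, R\, N'\subst{V}{y}$. Combining this with the guarded induction hypothesis under the tick $\tickA$ gives $\latbind\tickA\kappa(\forall V.\, M'\subst{V}{x} \appsimmay[\kappa] N'\subst{V}{y})$, which is precisely $(\olam x M') \appsimmayV[\kappa] (\olam y N')$.

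\textbf{Expected obstacle.} The routine part is the guarded recursion in Step~2 together with the commutation of $\forall V$ and $\later^\kappa$ needed to apply the induction hypothesis uniformly in $V$. The more delicate part is the extraction in Step~1: one must commute a $\forall \kappa$ past an existential (propositional truncation of a $\Sigma$) in order to pull a single value $W$ outside the clock quantifier, and this is only legal because $\Val$ is clock irrelevant. It is precisely the use of Lemma~\ref{lem:convmay:elim}, together with the coinductive nature of $\appsimmay = \forall\kappa.\,\appsimmay[\kappa]$, that bridges the gap between the inductive $\BSmay$ used in the definition of simulation and the guarded $\BSopmay[\kappa]$ appearing in $\appsimmay[\kappa]$.
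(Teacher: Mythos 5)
Your overall strategy is the right one, and it is the one the paper's machinery is set up for: use Lemma~\ref{lem:convmay:elim} to pass between the guarded semantics $\BSopmay[\kappa]$ and the inductive semantics $\BSmay$; in Step~1 commute $\forall\kappa$ past the truncated $\Sigma$ (legitimate because $\Val$ is clock irrelevant, using $\forall\kappa.\Trunc{A}\simeq\Trunc{\forall\kappa.A}$, extracting the $\kappa$-independent conjunct $N\BSmay W$ by instantiating at a clock, and finishing with $\force$); in Step~2 prove containment of an arbitrary simulation by guarded recursion, using the simulation property of $R$ ``now'' and the L\"ob hypothesis under the tick. Step~1 is fine as written.

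The one place that needs repair is the appeal to Lemma~\ref{lem:convmay:elim} in Step~2. At that point you have already fixed $\kappa$ and entered the guarded fixed point, so your goal is $M \BSopmay[\kappa] Q^\kappa$ for that single clock, whereas the lemma as stated is an equivalence between the globally quantified statements $\forall\kappa.\,M\BSopmay[\kappa]Q^\kappa$ and $\forall V.\,M\BSmay V\to\forall\kappa.\,Q^\kappa(V)$; it does not literally speak about a fixed clock, so as written the reduction is not licensed. The fix is short: apply the lemma to the constant family $\kappa'\mapsto Q^\kappa$ (the lemma holds in an arbitrary context, in particular one containing $\kappa$ and the L\"ob hypothesis), concluding that $\forall V.\,M\BSmay V\to Q^\kappa(V)$ implies $\forall\kappa'.\,M\BSopmay[\kappa']Q^\kappa$, and then instantiate $\kappa':=\kappa$ to obtain the per-clock goal. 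With that one-line adjustment the rest of Step~2 (choice of $W=\olam y{N'}$ from the simulation property, introduction of $\lambda V'$ inside the tick abstraction, and tick application of the induction hypothesis) goes through exactly as you describe, and together with Step~1 this establishes that $\appsimmay$ is the greatest applicative may-simulation.
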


We now proceed to prove that applicative may-similarity is a congruence. 
Most proofs of this use syntactic 
arguments, but here we use a semantic method developed by Pitts~\cite{pitts1997note}
in the context of domain theory, which we adapt to guarded recursion. As a first step we construct a denotational 
semantics of the untyped lambda calculus. 

\subsection{Denotational semantics}
\label{sec:denotational:semantics}

Like the operational semantics, the denotational semantics is parametrised by a monad $\gMon$
equipped with a join-semilattice structure and a delay algebra structure, satisfying (\ref{eq:bind:comm:step}) and 
(\ref{eq:bind:comm:cup}). Closed terms will be interpreted as elements of the type $\D$ defined
by the following equations.
%
% We define a denotational semantics of the above lambda calculus using a reflexive domain $\D$ defined as follows
\begin{align*}
  \SVal &\defeq \later^\kappa(\SVal \to \gMon(\SVal)) &
  \D &\defeq \gMon(\SVal)
\end{align*}
Here the definition of $\SVal$ should be read as a guarded recursive definition, and states that semantic values
can be considered effectful computations on semantic values, but that this unfolding takes a single computation step.
%Note that $\D$ is a $\M$-algebra in the sense of the existence of a map $\Lalg : \M\,\D \to \D$, which is even an
%algebra for the monad structure on $\M$.
Define a semantic application $\SApp{}{} : \D \times \D \to \D$ as 
\[
\SApp d{d'} = d \sequencing \lambda f.~ d' \sequencing \lambda v.~ \gstep (\tabs\tickA\kappa{\tapp f\,v})
\]
and using this we define the operational semantics
$\den{-} : \Lambda(n) \to (\SVal)^n \to \D$
where $\Lambda(n)$ is the set of terms with at most $n$ free variables, as
\begin{align*}
 \den{x_i}\rho & = \gnow\,(\rho\,i) &
 \den{\olam{x_{n+1}}M}\rho & = \gnow(\tabs\tickA\kappa{(\lambda d . \den{M}(\rho, d))}) \\
 \den{M \, N}\rho & =\SApp{\den{M}\rho}{(\den{N}\rho)} &
 \den{M \oor N}\rho &= \den{M}\rho \gCup \den{N}\rho
\end{align*}
Note 
% that sometimes we silently include the type of semantic values $\SVal$ into $\D$, and 
that the interpretation of values factor through  $\gnow : \SVal \to \D$ via $\denv{-} : \Val \to \SVal$.

%
%We define termination predicates on the semantic side similarly to those on the syntactic:
%For $d \in \D$ and $Q : \SVal \to \Prop$ define the type $d \SconvMay Q$ as $\Mf d Q$. 
%% \begin{align*}
%%  \liftnow (f) \SconvMay Q & \defeq Q(f) & 
%%  \liftstep (f) \SconvMay Q & \defeq \latbind\tickA\kappa . (\tapp f\tickA \SconvMay Q) 
%% % \liftnow (f) \STerm Q & \defeq Q(f) & 
%% % \liftstep (f) \STerm Q & \defeq \Delay \tickA . (\tapp f\tickA \STerm Q) \\
%% % d \STerm & \defeq d \STerm (\lambda f. 1)
%% \end{align*}
%
%We note the following simple lemma.
%
%\begin{lemma} \label{lem:SApp:Sconv}
% Let $Q$ be a family of propositions, then
% $\SApp d{d'}\SconvMay Q = d \SconvMay \lambda f. d' \SconvMay \lambda v. \latbind\tickA\kappa. \tapp f\tickA  (v) \SconvMay Q$.
%\end{lemma}
%
%\begin{proof}
%  Using lemma~\ref{lemma:mf-eq} twice.
% %% By induction on $d$. If $d = \liftnow (d'')$ then the right hand side rewrites to $\SApp{\liftnow (d'')}{d'} \SconvMay Q$ which
% %% is the same as the left hand side. If $d = \liftstep(d'')$ then $\SApp d{d'}$ is $\liftstep (\lambda \tickA . (\SApp{\tapp{d''}\tickA}{d'}))$,
% %% so that $\SApp d{d'}\SconvMay Q$ rewrites to $\latbind\tickA\kappa. (\SApp{\tapp{d''}\tickA}{d'} \SconvMay Q)$. On the other hand 
% %% $d \SconvMay (\SApp{(-)}{d'} \SconvMay Q)$ in this case rewrites to $\latbind\tickA\kappa. (\tapp{d''}\tickA \SconvMay (\SApp{(-)}{d'} \SconvMay Q))$,
% %% and so by guarded recursion the two sides are equal.
%\end{proof}
%

\begin{theorem}[Soundness] \label{thm:soundness}
  $\gMon(\denv{-})(\eval\,M) = \den{M}$
% Let $Q : \SVal \to U$ be a family of propositions and let $M$ be a closed term. Then $M\BSopmay[\kappa] (Q\circ \denv{-})$ iff ${\den M \SconvMay {Q}}$.
% In particular, if $M \BSopmay[\kappa] \olam xN$ then $\den M \SconvMay\den{\olam xN}$. 
\end{theorem}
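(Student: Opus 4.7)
The plan is to prove the soundness equation by guarded recursion on the statement $\forall M.\ \gMon(\denv{-})(\eval\,M) = \den{M}$, combined with an outer structural induction on $M$. Before doing so, I would establish an auxiliary substitution lemma: for any term $M$ with free variables among $x_1,\dots,x_{n+1}$, any environment $\rho : (\SVal)^n$, and any value $V$,
\[
\den{M\subst{V}{x_{n+1}}}\rho = \den{M}(\rho, \denv{V}).
\]
This is proved by routine structural induction on $M$ and bridges the syntactic substitution used by $\eval$ in the application clause with the semantic environment extension used by $\den{-}$ in the abstraction clause.

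With that in hand, I apply $\fix^\kappa$ to obtain the guarded hypothesis $\later^\kappa(\forall M.\ \gMon(\denv{-})(\eval\,M) = \den{M})$, and under this hypothesis prove the unconditional statement by structural induction on $M$. The value case $\olam{x}{M'}$ is direct: $\eval(\olam{x}{M'}) = \gnow(\olam{x}{M'})$, the functor action preserves $\gnow$, and $\denv{\olam{x}{M'}}$ is defined as $\tabs{\tickA}{\kappa}(\lambda d.\,\den{M'}(d))$, which is exactly what $\den{\olam{x}{M'}}$ produces. The choice case $M \oor N$ follows by noting that $\gMon(\denv{-})$ preserves $\cup$ (a consequence of expressing the functor action as bind with unit and applying (\ref{eq:bind:comm:cup})), and then applying the structural IHs to $M$ and $N$.

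The application case $M\,N$ is the main obstacle. Unfolding the definition of $\eval(MN)$ gives a doubly-nested bind followed by $\gstep(\tabs{\tickA}{\kappa}{\eval(M'\subst{V}{x})})$. I would push $\gMon(\denv{-})$ inside both binds using the naturality law $\gMon(\denv{-})(\bind{m}{x}{t}) = \bind{m}{x}{\gMon(\denv{-})(t)}$, and then inside the inner bind use the delay-algebra homomorphism property of $\gMon(\denv{-})$—which is (\ref{eq:bind:comm:step}) instantiated with the unit—to commute the functor past $\gstep$, producing $\gstep(\tabs{\tickA}{\kappa}{\gMon(\denv{-})(\eval(M'\subst{V}{x}))})$. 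The argument now sits under a tick abstraction, so the guarded hypothesis applies and rewrites it as $\gstep(\tabs{\tickA}{\kappa}{\den{M'\subst{V}{x}}})$; the substitution lemma then turns this into $\gstep(\tabs{\tickA}{\kappa}{\den{M'}(\denv{V})})$. On the other side, unfolding $\den{MN} = \SApp{\den{M}}{\den{N}}$ and using the structural IHs on $M$ and $N$ produces exactly the same doubly-nested bind, since the continuation binds a semantic value $f$ of the form $\tabs{\tickA}{\kappa}(\lambda d.\,\den{M'}(d))$ coming from an abstraction, and $\tapp{f}\,\denv{V} = \den{M'}(\denv{V})$.

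The crux of the difficulty is that the recursive call $\eval(M'\subst{V}{x})$ is not on a structurally smaller term than $MN$, so plain structural induction would not close the argument. The $\gstep$ that $\eval$ places around this recursive call is exactly what permits the guarded hypothesis to discharge it, and coordinating the outer structural recursion on syntax with the inner guarded recursion on computation steps—mediated by the substitution lemma—is the essential content of the proof.
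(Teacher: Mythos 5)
Your proposal is correct and follows essentially the same route as the paper: an outermost guarded recursion (via $\fix^\kappa$) over the statement for all closed terms, an inner structural induction on $M$, a substitution lemma relating $\den{M\subst{V}{x}}$ to $\den{M}(\rho,\denv{V})$, and in the application case pushing $\gMon(\denv{-})$ through the binds using the unit law together with equations~(\ref{eq:bind:comm:step}) and~(\ref{eq:bind:comm:cup}), so that the $\gstep$ guard lets the later-hypothesis discharge the non-structural call on $M'\subst{V}{x}$. You also correctly identify this interplay of $\gstep$ with the guarded hypothesis as the crux, which is exactly the point of the paper's argument.
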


We now specialise $T$ to $\Pmay$ for the following notation and corollary.
If $Q : A \to \Prop$ and $m : \Pmay A$, we shall use the infix notation $m \SconvMay Q$ for $\gLift Qm$,
where $\gLift$ is the lifting of $\Pmay$.

\begin{corollary} \label{cor:soundness}
 The statements $\den M \SconvMay Q$  and 
 $M \BSopmay[\kappa] Q \circ \denv{-}$ are equivalent.
\end{corollary}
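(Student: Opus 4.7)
The plan is to derive this as a direct consequence of the Soundness theorem (Theorem \ref{thm:soundness}) combined with property (\ref{item:Lift:comp}) of the predicate lifting (Definition \ref{def:pred:lifting}). Since the corollary is stated as an equivalence of propositions, by univalence it suffices to establish an equality of the corresponding propositions, or indeed a logical equivalence.

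First, I would unfold the definitions on both sides. By definition, $M \BSopmay[\kappa] (Q \circ \denv{-})$ is $\gLift(Q \circ \denv{-})(\eval\,M)$, where $\gLift$ is the lifting of $\Pmay$ to predicates. Similarly, $\den M \SconvMay Q$ is, by the notational convention introduced just before the corollary, the proposition $\gLift Q(\den M)$.

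Next, I would apply Theorem \ref{thm:soundness}, which gives the path equality $\den M = \Pmay(\denv{-})(\eval\,M)$. Substituting this into the expression $\gLift Q(\den M)$ yields $\gLift Q(\Pmay(\denv{-})(\eval\,M))$. Now property (\ref{item:Lift:comp}) of Definition \ref{def:pred:lifting}, instantiated with $f \defeq \denv{-}$ and $P \defeq Q$, states exactly that $\gLift(Q)(\Pmay(\denv{-})(a)) = \gLift(Q \circ \denv{-})(a)$. Applying this with $a \defeq \eval\,M$ gives precisely $\gLift(Q \circ \denv{-})(\eval\,M)$, which is $M \BSopmay[\kappa] (Q \circ \denv{-})$.

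There is no real obstacle here: the corollary is essentially bookkeeping on top of the soundness theorem. The only subtle point is making sure that the convention $\den M \SconvMay Q \defeq \gLift Q(\den M)$ from the paragraph preceding the corollary is invoked correctly, and that $T$ has been specialised to $\Pmay$ so that $\gMon(\denv{-})$ in the statement of soundness is the functorial action of $\Pmay$ on $\denv{-}$, matching the hypothesis of property (\ref{item:Lift:comp}).
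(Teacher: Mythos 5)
Your argument is correct and is exactly the intended one: the corollary follows from Theorem~\ref{thm:soundness} together with property~(\ref{item:Lift:comp}) of Definition~\ref{def:pred:lifting}, applied with $f = \denv{-}$ and $a = \eval\,M$, after unfolding the notations $\SconvMay$ and $\BSopmay[\kappa]$. This matches the paper's (implicit) proof, so there is nothing to add.
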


\subsection{Relating syntax and semantics}
\label{sec:relating:syntax:semantics}

We now construct a relation between syntax and semantics that will allow us to
use the model to reason about the operational semantics. The relation is similar to
the one constructed by Pitts~\cite{pitts1997note} in the setting of domain theory, but 
whereas Pitts must provide a technical argument for the existence of the relation,
which is far from obvious in the domain theoretic setting, in our setting the relation
exists simply by guarded recursion. 

In this section we specialise the model from the general monad $\gMon$ to $\Pmay$, and 
define two relations, one on values and one on general terms as follows
\begin{align*}
 \SSrel & : \D \times \Lambda \to \Prop \\
 \SSrelV & : \SVal \times \Val \to \Prop \\ 
 d \SSrel M & \,\defeq d \SconvMay \lambda v . (\exists V. M \BSmay V \hmeet v \SSrelV V) \\
 v \SSrelV \olam xM & \,\defeq  \latbind\tickA\kappa{(\forall} v', V' . v' \SSrelV V' 
 \to ({\tapp[\tickA]v}(v')) \SSrel M \subst {V'} x)
\end{align*}
This is well-defined, because unfolding the definition of $\SSrel$ in the definition of $\SSrelV$ 
gives a guarded recursive definition of $\SSrelV$. 
If $\rho : (\SVal)^n$ and $\sigma : \Val^n$ write $\rho \SSrelV \sigma$ to mean 
$\rho_1 \SSrelV \sigma_1 \hmeet \dots \hmeet \rho_n \SSrelV \sigma_n$. 

\begin{lemma}[Fundamental lemma] \label{lem:fundamental}
 If $\rho \SSrelV\sigma$ then $\den M \rho \SSrel M\sigma$.
\end{lemma}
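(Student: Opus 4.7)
The plan is to proceed by induction on the structure of the term $M$, using the inductive generation of $\Lambda(n)$. For each constructor we unfold the definitions of $\den{-}$ and $\SSrel$ and use the characteristic equations (\ref{item:Lift:now})--(\ref{item:Lift:cup}) of the predicate lifting $\gLift$ to reduce the goal to applications of the induction hypothesis.

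\emph{Variable case.} For $M = x_i$, we have $\den{x_i}\rho = \gnow(\rho_i)$. By clause (\ref{item:Lift:now}) of Definition~\ref{def:pred:lifting}, $\gnow(\rho_i)\SconvMay Q$ reduces to $Q(\rho_i)$, so the goal becomes $\exists V. \sigma_i \BSmay V \hmeet \rho_i \SSrelV V$; witnessed by $V = \sigma_i$ using the reflexivity rule of $\BSmay$ on values and the assumption $\rho \SSrelV \sigma$.

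\emph{Lambda case.} For $M = \olam{x_{n+1}}{M'}$, again $\gLift$ on a $\gnow$ collapses, and the obligation becomes $\tabs\tickA\kappa(\lambda d.\,\den{M'}(\rho,d)) \SSrelV \olam{x_{n+1}}{M'\sigma}$. Unfolding $\SSrelV$, applying the tick, and $\beta$-reducing, this is $\latbind\tickA\kappa(\forall v',V'.\,v'\SSrelV V' \to \den{M'}(\rho,v') \SSrel (M'\sigma)\subst{V'}{x_{n+1}})$. Under the tick, the extended environments satisfy $(\rho,v') \SSrelV (\sigma,V')$, so the induction hypothesis on $M'$ closes the goal (observing $(M'\sigma)\subst{V'}{x_{n+1}} = M'(\sigma,V')$ under standard substitution conventions).

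\emph{Application case.} For $M = M_1\,M_2$, I unfold $\den{M_1\,M_2}\rho = \SApp{\den{M_1}\rho}{\den{M_2}\rho}$ and repeatedly apply the bind clause (\ref{item:Lift:bind}). The IH for $M_1$ gives $\den{M_1}\rho \SconvMay \lambda f.\,\exists V_1.\,M_1\sigma \BSmay V_1 \hmeet f \SSrelV V_1$. After the bind, for each such $f$ related to $V_1 = \olam{y}{M_1'}$, I apply the IH for $M_2$ in the same way to extract a $V_2$ with $M_2\sigma \BSmay V_2$ and the related semantic value. Using clause (\ref{item:Lift:step}) on $\gstep$, the residual obligation under a tick is $\tapp[\tickA]f(\denv{V_2}$-like argument$) \SSrel M_1'\subst{V_2}{y}$, which is exactly what unfolding $f \SSrelV V_1$ provides. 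Finally, the big-step application rule assembles $M_1\sigma\,M_2\sigma \BSmay V$ from $M_1\sigma \BSmay \olam{y}{M_1'}$, $M_2\sigma \BSmay V_2$, and the evaluation of the redex.

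\emph{Non-determinism case.} For $M = M_1 \oor M_2$, clause (\ref{item:Lift:cup}) splits the goal into the conjunction $\den{M_1}\rho \SconvMay (\ldots) \hmeet \den{M_2}\rho \SconvMay (\ldots)$, each discharged by the IH after weakening the existential witness through the operational rule $\frac{M_i\sigma \BSmay V}{(M_1\sigma) \oor (M_2\sigma) \BSmay V}$.

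The main obstacle I expect is the application case: threading the two induction hypotheses through the nested binds while keeping track of the guarded tick introduced by $\gstep$ in $\SApp$, and matching the result up with the recursive clause of $\SSrelV$ which is itself delayed by $\later^\kappa$. The other cases are essentially bookkeeping, but here the interaction between $\gLift$'s bind equation and the delayed structure of $\SSrelV$ requires care.
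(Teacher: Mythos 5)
Your proof is correct and follows the same route as the paper, which establishes the fundamental lemma by induction on $M$, unfolding $\den{-}$ and discharging each case via the predicate-lifting equations and the corresponding big-step rule. The only ingredient you use implicitly is monotonicity of the lifting $\gLift$ in its predicate argument (needed to weaken the postconditions coming from the induction hypotheses), which is a routine consequence of the definition of the lifting for $\Pmay$ via Lemma~\ref{lem:pmay:universal:prop}.
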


The fundamental lemma is proved by induction on $M$. Using this, one can prove the following 
correspondence between $\SSrel$ and $\appsimmay$

%\begin{corollary}[Computational adequacy] \label{cor:adequacy}
% If $\den M \STerm$ then $\Diamond (M \BSmay)$. 
%\end{corollary}
%
%The proof uses the following lemma
%
%\begin{lemma}
% If $d\SconvMay Q$ and $Q$ is a $\Box$-algebra, then $d \STerm$ implies $d \STerm Q$. 
%\end{lemma}
%
%\begin{proof}
% The proof is by induction on $d\STerm$. If $d = \liftnow d'$ then $d\SconvMay Q$ and $d\STerm Q$
% are definitionally equal. If $d = \liftstep d'$ then the assumptions give $\Delay \tickA . (\tapp{d'}\tickA \STerm)$
% and $\Later\tickA . (\tapp{d'}\tickA \SconvMay Q)$. Since $Q$ is a $\Box$-algebra, so is $\tapp{d'}\tickA \SconvMay Q$
% and so $\Later\tickA . (\tapp{d'}\tickA \SconvMay Q)$ implies $\Delay\tickA . (\tapp{d'}\tickA \SconvMay Q)$. By induction
% we then conclude $\Delay\tickA . (\tapp{d'}\tickA \STerm Q)$ which means $d\STerm Q$ as required. 
%\end{proof}
%
%\begin{proofof}{Corollary~\ref{cor:adequacy}}
% By Lemma~\ref{lem:fundamental} $\den M \SSrel M$ which implies $\den M \SconvMay (\lambda V . M \BSmay)$. The assumption
% then means that $\den M \STerm (\lambda V . M \BSmay)$ which by an easy induction implies $\Diamond (M\BSmay)$. 
%\end{proofof}
%
\begin{lemma} \label{lem:app:sim:is:SSRel}
If $M$ and $N$ are closed terms then $M \appsimmay N$ is equivalent to $\forall\kappa . \den M \SSrel N$. 
\end{lemma}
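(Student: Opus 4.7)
The plan is to prove simultaneously, by guarded recursion on $\kappa$, the following four statements: (a) $M \appsimmay[\kappa] N \iff \den M \SSrel N$ for all closed terms; (b) $V \appsimmayV[\kappa] W \iff \denv V \SSrelV W$ for all closed values; and two compatibility principles, (c) $v \SSrelV U \hmeet U \appsimmayV[\kappa] W \Rightarrow v \SSrelV W$, and (d) $d \SSrel M \hmeet M \appsimmay[\kappa] N \Rightarrow d \SSrel N$. The statement of the lemma follows from (a) by universally quantifying $\kappa$.

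Under the guarded hypothesis $\later^\kappa((a) \hmeet (b) \hmeet (c) \hmeet (d))$, I will establish these items in the order (c), (d), (b), (a). For (c) with $U = \olam xM$ and $W = \olam yN$: after going under a tick $\tickA : \kappa$ and unfolding, I instantiate $v \SSrelV \olam xM$ at the given $v', V'$ with $v' \SSrelV V'$ to obtain $\tapp[\tickA] v\,(v') \SSrel M[V'/x]$, instantiate $\olam xM \appsimmayV[\kappa] \olam yN$ at $V'$ to obtain $M[V'/x] \appsimmay[\kappa] N[V'/y]$, and apply (d), available under the tick via the guarded IH, to conclude. For (d): from $d \SSrel M$ and the definition of $\SconvMay$, any semantically returned $v$ is related via $\SSrelV$ to some $U$ with $M \BSmay U$; since $M \appsimmay[\kappa] N$ and $M \BSmay U$, Lemma \ref{lem:convmay:elim} supplies $W$ with $N \BSmay W$ and $U \appsimmayV[\kappa] W$, after which (c) yields $v \SSrelV W$; monotonicity of $\gLift$ in its predicate argument, derivable from Definition \ref{def:pred:lifting}, then finishes.

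For (b) at $V = \olam xM$, $W = \olam yN$, both directions proceed under a tick $\tickA : \kappa$. Forward, given $v' \SSrelV V'$, the fundamental lemma (Lemma \ref{lem:fundamental}) supplies $\den M(v') \SSrel M[V'/x]$, the hypothesis instantiated at $V'$ supplies $M[V'/x] \appsimmay[\kappa] N[V'/y]$, and (d) yields $\den M(v') \SSrel N[V'/y]$. Backward, given a value $U$, I instantiate the hypothesis at the pair $(\denv U, U)$, which is admissible by the fundamental lemma applied to $U$ (giving $\denv U \SSrelV U$); the substitution lemma $\den M(\denv U) = \den{M[U/x]}$ rewrites the conclusion to $\den{M[U/x]} \SSrel N[U/y]$, and (a), available under the tick via the guarded IH, yields $M[U/x] \appsimmay[\kappa] N[U/y]$. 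Finally, (a) follows from (b) via Corollary \ref{cor:soundness}: setting $Q(v) \defeq \exists W.\, N \BSmay W \hmeet v \SSrelV W$, (b) makes the predicate $\lambda V.\, \exists W.\, N \BSmay W \hmeet V \appsimmayV[\kappa] W$ appearing in the definition of $\appsimmay[\kappa]$ pointwise equivalent to $Q \circ \denv{-}$, so $M \appsimmay[\kappa] N \iff M \BSopmay[\kappa] (Q \circ \denv{-}) \iff \den M \SconvMay Q$, which is $\den M \SSrel N$.

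The main obstacle is the asymmetry built into $\SSrelV$, which quantifies over all semantic $v'$ related to a syntactic $V'$, not only those of the form $\denv{V'}$. This prevents a naive guarded induction on values from closing the forward direction of (b), and forces the compatibility principles (c) and (d) to be bundled into the guarded recursion; the fundamental lemma then bridges the gap between the ``liberal'' semantic relation and its syntactic counterpart.
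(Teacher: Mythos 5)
There is a genuine gap, and it sits exactly at the point where clock quantification matters. In your step (d) you argue: from $M \appsimmay[\kappa] N$ and $M \BSmay U$, ``Lemma~\ref{lem:convmay:elim} supplies $W$ with $N \BSmay W$ and $U \appsimmayV[\kappa] W$''. But Lemma~\ref{lem:convmay:elim} has the hypothesis $\forall\kappa.\, M \BSopmay[\kappa] Q^\kappa$; you only have the statement at a single fixed $\kappa$. At a fixed clock, $M \appsimmay[\kappa] N$ is $\gLift Q(\eval\,M)$, and if the evaluation of $M$ to $U$ takes $m$ steps this only yields $Q(U)$ buried under $m$ occurrences of $\later^\kappa$; these laters cannot be stripped at a fixed clock (that is what $\force$ and the $\forall\kappa$ quantification are for). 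Since the depth at which $d$ returns $v$ is unrelated to $m$, the pointwise implication you feed to monotonicity of $\gLift$ is not available ``now'', and (d) — hence (c), the forward half of (b), and the forward half of (a) — is not established. Note also that your per-clock equivalence (a) is strictly stronger than the lemma, which only claims the equivalence after quantifying $\kappa$ on both sides; your argument neither proves the per-clock version nor needs it.

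The repair is to break the symmetry of your scheme, which is what the paper does. The right-to-left direction can indeed be done per-clock by guarded recursion, essentially as in your backward (b)/(a) (fundamental lemma for $\denv{V'} \SSrelV V'$, the substitution identity $\den{M}(\denv{V'}) = \den{M[V'/x]}$, soundness, and the guarded IH). For left-to-right, however, you should prove upward closure of $\SSrel$ in its second argument with respect to the \emph{globally} quantified similarity: if $d \SSrel M$ and $M \appsimmay N$ (i.e.\ $\forall\kappa$), then $d \SSrel N$, together with the corresponding value-level statement. With the $\forall\kappa$ hypothesis the appeal to Lemma~\ref{lem:convmay:elim} is legitimate — it is precisely the device that localises the steps — and the witness $\exists W.\, N \BSmay W \hmeet U \appsimmayV[\kappa] W$ becomes available at the current time for every clock. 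Combining this upward closure with Lemma~\ref{lem:fundamental} (which gives $\den{M} \SSrel M$ for closed $M$) yields $\forall\kappa.\,\den{M} \SSrel N$ from $M \appsimmay N$. Your observation about why the naive induction fails (the quantification over arbitrary semantic $v'$ in $\SSrelV$) is accurate, and your use of the fundamental and substitution lemmas is the right machinery; the flaw is only that the transfer along similarity must be done at the level of $\appsimmay$, not $\appsimmay[\kappa]$.
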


The left to right direction is proved by showing that $\SSrel$ is upward closed in its second argument. The 
other direction is proved using guarded recursion. Note that as a consequence of 
Lemma~\ref{lem:fundamental} and~\ref{lem:app:sim:is:SSRel} it follows that $\appsimmay$ is a reflexive relation.

%\begin{corollary}\label{cor:app:bisim:refl}
% Applicative bisimilarity is a reflexive relation.
%\end{corollary}
%
\begin{theorem} \label{thm:appsim:congruence}
 $\appsimmay$ is a congruence, i.e., if $M \appsimmay N$ and $C[-]$ is a context then also
 $C[M] \appsimmay C[N]$. 
\end{theorem}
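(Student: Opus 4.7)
The plan is to use the semantic characterization from Lemma~\ref{lem:app:sim:is:SSRel}, which states that (for closed $M, N$) we have $M \appsimmay N$ iff $\forall\kappa.\,\den M \SSrel N$. Since every context is built from the three term constructors (application, non-deterministic choice, and lambda abstraction) and the hole, congruence reduces by structural induction on $C[-]$ to closure properties of $\SSrel$ under the semantic counterparts of each constructor. The base case is the hypothesis; each inductive case reduces, via compositionality of $\den{-}$, to verifying one of three closure lemmas.

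The three closure lemmas to establish are the following. For non-deterministic choice: if $d \SSrel M_1$ and $d' \SSrel M_2$ then $d \cup d' \SSrel M_1 \oor M_2$, which follows from property~(\ref{item:Lift:cup}) of the lifting $\gLift$ combined with the big-step rule for $\oor$. For application: if $d \SSrel M_1$ and $d' \SSrel M_2$ then $\SApp{d}{d'} \SSrel M_1 M_2$, which uses properties~(\ref{item:Lift:step}) and~(\ref{item:Lift:bind}) to decompose the semantic application into the two binds and one $\gstep$ that mirror the three premises of the operational rule for application. For lambda abstraction: if $M \appsimmay N$ as open terms in a free variable $x$, then $\olam x M \appsimmay \olam x N$.

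For the abstraction case, unfolding the definition of $\SSrelV$ at lambda values reduces the goal, under $\later^\kappa$, to showing that for every semantic value $v$ and syntactic value $V$ with $v \SSrelV V$, we have $\den{M}\{v/x\} \SSrel N\{V/x\}$. The fundamental lemma (Lemma~\ref{lem:fundamental}) applied to $v \SSrelV V$ gives $\den{M}\{v/x\} \SSrel M\{V/x\}$, and the hypothesis $M\{V/x\} \appsimmay N\{V/x\}$ gives $\den{M\{V/x\}} \SSrel N\{V/x\}$ via Lemma~\ref{lem:app:sim:is:SSRel}; chaining these using upward closure of $\SSrel$ in its second argument (implicit in the proof of Lemma~\ref{lem:app:sim:is:SSRel}) completes the argument.

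The main obstacle is the application case. The operational rule $M_1 M_2 \BSmay V$ requires three premises ($M_1 \BSmay \olam x M_1'$, $M_2 \BSmay V'$, and $M_1'\{V'/x\} \BSmay V$), and these must be threaded through two bind operations and one $\gstep$ on the semantic side, with the $\gstep$ matching the guarded occurrence of $\SSrelV$ at lambda values. Careful unfolding of $\SSrel$ and $\SSrelV$ together with the monadic properties of $\gLift$ is needed to align the semantic and operational structures. Once the three closure lemmas are in place, the induction on $C[-]$ is immediate, and extending from closed to open terms is standard by closure under value substitutions.
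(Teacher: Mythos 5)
Your proof is correct and, for the core cases, is essentially the paper's argument: you reduce congruence (via Lemma~\ref{lem:app:sim:is:SSRel}, using the fundamental lemma to supply $\den{P} \SSrel P$ for the untouched subterms of a context) to the two semantic closure facts that $d \SSrel M$ and $d' \SSrel N$ imply $\SApp{d}{d'} \SSrel M\,N$ and $d \cup d' \SSrel M \oor N$, which is exactly how the paper handles application and choice (the paper phrases their proof as ``by guarded recursion'', you phrase it via the lifting properties; same content). The one place you genuinely diverge is the abstraction case: the paper never touches the model there, but simply observes that $\olam x M \appsimmay \olam x N$ unfolds, via $\appsimmayV[\kappa]$, to $\later^\kappa(\forall V.\, M\subst{V}{x} \appsimmay[\kappa] N\subst{V}{x})$, which is immediate from the definition of similarity on open terms. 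You instead route through $\SSrelV$, Lemma~\ref{lem:fundamental}, and upward closure of $\SSrel$ in its second argument with respect to $\appsimmay[\kappa]$. Your version does go through --- all the facts you invoke hold and can be weakened under the tick, and upward closure is available since it is what the paper uses for the left-to-right direction of Lemma~\ref{lem:app:sim:is:SSRel} --- but it is heavier than needed and makes the abstraction case depend on the model and on that auxiliary closure property, whereas the paper's direct unfolding keeps it purely definitional. The trade-off is only one of economy; both decompositions establish the theorem.
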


\begin{proof}
 Using reflexivity it suffices to show that if $M \appsimmay N$
 and $M' \appsimmay N'$ then $M \, M' \appsimmay N \, N'$, $M \oor M' \appsimmay N \oor N'$
 and $\olam xM \appsimmay \olam x{N}$. The cases of application and choice can be reduced to 
 the statements that if $d \SSrel M$ and $d' \SSrel N$ then $\SApp{d}{d'} \SSrel M \, N$ 
 and $d \cup d'\SSrel M \oor N$, which can be proved by guarded recursion. 
 To prove $\olam xM \appsimmay \olam x{N}$ it suffices to prove that 
 $\later^\kappa (\forall V. M\subst {V} x \appsimmay N\subst {V} x)$. By definition of applicative 
 may-similarity for open terms, however, we know that $M\subst {V}x \appsimmay N\subst {V}x$.
\end{proof}

\section{A powerdomain for must-convergence}
\label{sec:must:pow:dom}

We now introduce our powerdomain construction $\Pmust$ for must-convergence. 
This should have an inclusion $\mustnow : A \to \Pmust(A)$, a join-semilattice 
structure $\mustcup$ and a delay algebra structure $\muststep$. However, 
when considering must-convergence, a term $M \oor N$ diverges if $M$ diverges
even if $N$ converges. To enforce that in our powerdomain we use equations 
to enforce parallel evaluation of subcomputations and stating that terminating
values are postponed until all subcomputations have been evaluated fully:
\begin{align}
 \muststep (x) \mustcup \muststep (y) & = \muststep(\tabs\tickA\kappa{\tapp x \mustcup \tapp y}) \label{eq:must:step:step} \\
 \muststep (x) \mustcup \mustnow (y) & = \muststep(\tabs\tickA\kappa{(\tapp x \mustcup \mustnow (y))}) \label{eq:must:step:now}
\end{align}
These equations (together with the derivable symmetric version of (\ref{eq:must:step:now})) 
allow steps to bubble up the syntax tree, to a normal form consisting of a 
(possibly infinite) sequence of computation steps followed by a finite set of 
values. Following this intuition we define
\[
  \Pmust(A) \defeq \Lift(\Pfin(A))
\]
This has the benefit over, say a HIT given by the equations above, of giving 
direct access to the set of possible values returned by a computation that must 
converge. 

By definition $\Pmust$ carries a delay-algebra structure, and the inclusion of $A$ into
$\Pmust(A)$ can be defined as 
\[
 \mustnow(a) \defeq \liftnow(\{a\})
\]
The join-semilattice can be defined by guarded recursion using the equations 
(\ref{eq:must:step:step}), (\ref{eq:must:step:now}), the symmetrisation of (\ref{eq:must:step:now})
and 
\[
\liftnow x \mustcup \liftnow y \defeq \liftnow (x \cup y)
\]
A natural question is whether $\Pmust$ defines a monad. Since it is the composite 
of two monads, it is sufficient that there is a distributive law of monads, and indeed a natural
candidate is easily defined as 
\begin{align*}
  \dist & : \Pfin \Lift \to \Lift \Pfin & 
 \dist(X) & \defeq\cup_{x \in X} \Lift(\{-\})(x) 
\end{align*}
However, this only defines a distributive law of the monad $\Pfin$ 
over $\Lift$ considered as a functor, not a monad.

\begin{proposition} \label{prop:dist:law}
 Of the four diagrams for distributive laws over monads:
 \[
       \begin{tikzcd}
        \Lift \ar{dr}{\Lift(\{-\})} \ar[swap]{d}{\{-\}} & &
        \Pfin\Pfin\Lift \ar{r}{\Pfin(\dist)} \ar[swap]{d}{\cup} & \Pfin\Lift\Pfin \ar{r}{\dist} & \Lift\Pfin\Pfin \ar{d}{\Lift(\cup)} \\
        \Pfin\Lift \ar{r}{\dist} & \Lift\Pfin \!\!\!\!\!\! & \Pfin\Lift \ar{rr}{\dist} & &  \Lift\Pfin \\
        \Pfin \ar{dr}{\liftnow{}} \ar[swap]{d}{\!\Pfin(\liftnow{}\!)} &   
        & \Pfin\Lift\Lift \ar{r}{\dist} \ar[swap]{d}{\Pfin(\Lmult)} & \Lift\Pfin\Lift \ar{r}{\Lift\dist} & \Lift\Lift\Pfin \ar{d}{\Lmult} \\
        \Pfin\Lift \ar{r}{\dist} & \Lift\Pfin \!\!\!\!\!\!& \Pfin\Lift \ar{rr}{\dist} & &  \Lift\Pfin 
      \end{tikzcd}
 \]
% \[
%       \begin{tikzcd}
%        \Pfin \ar{dr}{\liftnow{}} \ar[swap]{d}{\Pfin(\liftnow{})} & && \Lift \ar{dr}{\Lift(\{-\})} \ar[swap]{d}{\{-\}} \\
%        \Pfin\Lift \ar{r}{\dist} & \Lift\Pfin && \Pfin\Lift \ar{r}{\dist} & \Lift\Pfin
%      \end{tikzcd}
% \]
% \[
%       \begin{tikzcd}
%        \Pfin\Lift\Lift \ar{r}{\dist} \ar[swap]{d}{\Pfin(\mu)} & \Lift\Pfin\Lift \ar{r}{\Lift\dist} & \Lift\Lift\Pfin \ar{d}{\mu} \\
%        \Pfin\Lift \ar{rr}{\dist} & &  \Lift\Pfin
%      \end{tikzcd}
% \]
% \[
%       \begin{tikzcd}
%        \Pfin\Pfin\Lift \ar{r}{\Pfin(\dist)} \ar[swap]{d}{\cup} & \Pfin\Lift\Pfin \ar{r}{\dist} & \Lift\Pfin\Pfin \ar{d}{\cup} \\
%        \Pfin\Lift \ar{rr}{\dist} & &  \Lift\Pfin
%      \end{tikzcd}
% \]
all but the last commute. 
\end{proposition}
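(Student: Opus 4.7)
The plan is to exploit the universal property of $\Pfin$ as the free (non-empty) join-semilattice to dispatch the three commuting diagrams, and to exhibit an explicit counterexample for the last. As a preliminary step I would verify that $\dist$ is itself a join-semilattice homomorphism: using the universal property of $\Pfin(A)$ implicit in the bind-style definition of $\dist$, one reads off $\dist(\{x\}) = \Lift(\{-\})(x)$ and $\dist(X \cup Y) = \dist(X) \mustcup \dist(Y)$. The first of these already handles the unit-of-$\Lift$ diagram (top left).

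For the unit-of-$\Pfin$ diagram (bottom left), both $\dist \circ \Pfin(\liftnow{})$ and $\liftnow{}$ are join-semilattice homomorphisms $\Pfin(A) \to \Lift\Pfin(A)$, the second because the definition of $\mustcup$ on $\Lift\Pfin$ gives $\liftnow{X} \mustcup \liftnow{Y} = \liftnow{(X \cup Y)}$. By freeness of $\Pfin(A)$ it suffices to check them on singletons, where $\dist(\Pfin(\liftnow{})(\{a\})) = \dist(\{\liftnow{a}\}) = \Lift(\{-\})(\liftnow{a}) = \liftnow{\{a\}}$. The multiplication-of-$\Pfin$ diagram (top right) is handled in the same style: one shows that both composites are semilattice homomorphisms with respect to the outer $\Pfin$, where the only nontrivial verification is that $\Lift(\cup)$ preserves $\mustcup$, and this is routine by guarded recursion on the $\Lift$-layer using \eqref{eq:must:step:step}, \eqref{eq:must:step:now} and the $\liftnow{}$--$\liftnow{}$ equation. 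On a singleton $\{X\}$ the left path yields $\dist(X)$, while the right path yields $\Lift(\cup)(\Lift(\{-\})(\dist(X))) = \Lift(\cup \circ \{-\})(\dist(X)) = \dist(X)$, using the first diagram applied to $\dist(X)$ and the monad identity $\cup \circ \{-\} = \mathrm{id}$.

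For the last diagram I would exhibit a counterexample. Taking distinct $a, b : A$, set $v = \tabs\alpha\kappa{\liftnow{a}}$, $w = \tabs\alpha\kappa{\liftnow{(\liftnow{b})}}$, and $X = \{\liftnow{(\liftstep{v})}, \liftstep{w}\} : \Pfin\Lift\Lift(A)$. A direct unfolding shows that $\dist(\Pfin(\Lmult)(X))$ is a single outer $\liftstep{}$ around $\liftnow{\{a,b\}}$, whereas $\Lmult(\Lift(\dist)(\dist(X)))$ has two nested $\liftstep{}$'s around $\liftnow{\{a,b\}}$: intuitively, $\dist$ forces the outer step of $\liftstep{w}$ to synchronize with the $\liftnow{}$ at the head of $\liftnow{(\liftstep{v})}$ before $\Lmult$ can collapse the inner delay, introducing an extra computation step. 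The main obstacle among the commuting cases is cleanly establishing that $\Lift(\cup)$ is a semilattice homomorphism; everything else is bookkeeping with the universal properties of $\Pfin$ and $\Lift$.
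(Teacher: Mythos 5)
Your proposal is correct and takes essentially the paper's route: the three commuting squares follow by routine use of the freeness of $\Pfin$ together with the guarded-recursive check that $\Lift(\cup)$ is a $\mustcup$-homomorphism, and your counterexample to the last square is, up to swapping the two elements and using two distinct values in place of the paper's single $x$, exactly the one the paper gives. In both cases the point is the same: the composite $\dist\circ\Pfin(\Lmult)$ yields a single $\liftstep{}$ around $\liftnow{\{a,b\}}$ while $\Lmult\circ\Lift(\dist)\circ\dist$ yields two, so the two sides differ by a computation step and agree only up to weak bisimilarity.
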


A counterexample to the last is 
 %\[
 $\{\liftstep (\tabs\tickA\kappa{\liftnow(\liftnow x)}), \liftnow (\liftstep \tabs\tickB\kappa{\liftnow x})\}$
 %: \Pfin(\Lift(\Lift A))
 %\]
 which is mapped by the lower composite to 
$\liftstep (\tabs\tickA\kappa{\liftnow \{x\}}))$
and by the upper to $\liftstep \tabs\tickA\kappa{\liftstep \tabs\tickB\kappa{\liftnow \{x\}}}$.
Note that these only differ by a finite number of computation steps, i.e.,
are equal up to weak bisimilarity. We conjecture 
that this is generally true and that $\Pmust$ is a monad up to weak bisimilarity.

As a consequence, using $\dist$ to define the multiplication of $\Pmust$ does not 
define a monad. Nevertheless, it does define a bind operation. 

\begin{lemma}
 The bind operation induced by $\dist$ maps $f : A \to \Pmust (B)$ 
 and $a : \Pmust (A)$ to 
 \[
   \Lbind aX{\mustcup_{x\in X} f(x)}
 \]
 and satisfies the equations 
  $(\bindnolambda {\mustnow (a)}f)  = f(a)$ and 
  $(\bindnolambda  a{\mustnow}) = a$, and moreover defines a homomorphism of delay-algebras 
  as well as join-semilattices in $a$. It does not satisfy the associativity axiom.
\end{lemma}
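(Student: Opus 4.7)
The plan is to derive the closed-form bind by unfolding the composite-monad construction, then verify the unit laws and the two homomorphism properties, and finally argue non-associativity via Proposition~\ref{prop:dist:law}.

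First, I derive the formula. The bind induced by $\dist$ is the composite $\Lift(\cup) \circ \Lmult \circ \Lift(\dist) \circ \Lift(\Pfin(f))$ applied to $a$. The inner map $\dist \circ \Pfin(f) : \Pfin(A) \to \Lift(\Pfin(\Pfin(B)))$ sends $X$ to $\cup_{Y \in \Pfin(f)(X)} \Lift(\{-\})(Y)$, which by~\eqref{eq:image:powset} equals $\cup_{x \in X} \Lift(\{-\})(f(x))$. Postcomposing with $\Lift(\cup)$ collapses the inner $\Pfin\Pfin$, so $\Lift(\cup) \circ \dist \circ \Pfin(f)$ sends $X$ to $\mustcup_{x \in X} f(x)$ in $\Pmust(B)$; this map $X \mapsto \mustcup_{x \in X} f(x) : \Pfin(A) \to \Pmust(B)$ is precisely the free-join-semilattice extension of $f$ into $(\Pmust(B), \mustcup)$. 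The full bind is then the $\Lift$-bind of $a$ with this extension, which is exactly $\Lbind{a}{X}{\mustcup_{x \in X} f(x)}$.

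The unit laws follow quickly. For the left unit, $\mustnow(a) = \liftnow(\{a\})$, so by the $\Lift$ left unit law, $\bindnolambda{\mustnow(a)}{f} = \mustcup_{x \in \{a\}} f(x) = f(a)$. For the right unit, iterating $\liftnow(X) \mustcup \liftnow(Y) = \liftnow(X \cup Y)$ yields $\mustcup_{x \in X} \mustnow(x) = \liftnow(X)$, hence $\bindnolambda{a}{\mustnow} = \Lbind{a}{X}{\liftnow(X)} = a$ by the $\Lift$ right unit law. The delay-algebra homomorphism property is the analogue of~\eqref{eq:bind:comm:step} for $\Lift$-bind: $\bindnolambda{\muststep(x)}{f} = \muststep(\tabs\tickA\kappa{\bindnolambda{\tapp x}{f}})$, which follows directly from the fact that $\Lift$-bind, as a $\Lift$-algebra homomorphism, commutes with $\liftstep$.

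The main obstacle is the join-semilattice homomorphism $\bindnolambda{a \mustcup b}{f} = \bindnolambda{a}{f} \mustcup \bindnolambda{b}{f}$, because $\mustcup$ on $\Pmust$ is itself defined by guarded recursion. I would prove this by guarded recursion with case analysis on whether $a, b$ have shape $\liftnow(-)$ or $\liftstep(-)$. The base case $a = \liftnow(X), b = \liftnow(Y)$ rewrites the left side to $\mustcup_{x \in X \cup Y} f(x)$, which equals $(\mustcup_{x \in X} f(x)) \mustcup (\mustcup_{y \in Y} f(y))$ since the free extension is a join-semilattice homomorphism. In the step-involving cases I push bind through $\liftstep$ on both sides using the delay-algebra homomorphism just established, apply the defining equations~\eqref{eq:must:step:step} and~\eqref{eq:must:step:now} for $\mustcup$, and close under the tick with the guarded hypothesis; the general $\liftstep \mustcup \liftnow(Y)$ case with $Y$ not a singleton reduces to~\eqref{eq:must:step:now} by decomposing $Y$ using the idempotent binary union.

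For non-associativity, note that bind-associativity is equivalent to associativity of the multiplication $\mu_\Pmust(a) = \bindnolambda{a}{\id}$. Together with the already-verified unit and homomorphism properties, $\mu_\Pmust$-associativity is precisely what would make $\Pmust$ into a monad whose multiplication comes from $\dist$, i.e.\ it would imply that $\dist$ is a full distributive law of monads, and in particular that the fourth diagram of Proposition~\ref{prop:dist:law} commutes. Since that diagram provably fails, bind cannot be associative, and the concrete counterexample exhibited immediately after the proposition transports to an explicit counterexample for bind-associativity.
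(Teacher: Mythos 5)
Your derivation of the closed form, the two unit laws, and the delay-algebra homomorphism are essentially fine (the one glossed step is that $\Lift(\cup)$ commutes with the union $\mustcup$ used in $\dist$, a routine guarded-recursion lemma since $\cup : \Pfin\Pfin(B)\to\Pfin(B)$ is a semilattice map). The genuine problem is your argument for the join-semilattice homomorphism in $a$, in the mixed case $a = \liftstep{x}$, $b = \liftnow{Y}$. After pushing bind under the step and applying the guarded hypothesis, the left-hand side becomes $\liftstep{(\tabs\tickA\kappa{((\bindnolambda{\tapp x}{f})\mustcup w)})}$ with $w = \mustcup_{y\in Y}f(y)$, while the right-hand side is $\liftstep{(\tabs\tickA\kappa{(\bindnolambda{\tapp x}{f})})}\mustcup w$. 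The defining equations \eqref{eq:must:step:step} and \eqref{eq:must:step:now} only let you pull a step past a $\liftnow{}$; here $w$ is an arbitrary element of $\Pmust(B)$ (it is a $\liftstep{}$ as soon as some $f(y)$ takes a step), and when $w$ is a step the two sides synchronize steps differently, so no amount of decomposing $Y$ into singletons closes the case. This is not merely a presentational gap: take $a = \muststep(\tabs\tickA\kappa{\mustnow(a_0)})$, $b = \mustnow(b_0)$, $f(a_0) = \mustnow(c)$, $f(b_0) = \muststep(\tabs\tickA\kappa{\mustnow(d)})$. Then $\bindnolambda{(a\mustcup b)}{f}$ returns $\{c,d\}$ after two steps, whereas $(\bindnolambda{a}{f})\mustcup(\bindnolambda{b}{f})$ returns it after one, and these are distinct elements of $\Pmust(B)$ (they agree only up to weak bisimilarity). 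So the equation you are trying to prove is refutable for general $f$; it can only hold under restrictions on $f$ or up to weak bisimilarity.

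This also makes your write-up internally inconsistent: if bind did commute with $\mustcup$ in $a$ on the nose, associativity would follow by exactly your own guarded-recursion technique, since the $\liftnow{X}$ case of $(\bindnolambda{(\bindnolambda af)}g) = \bindnolambda a{(\lambda x.\,\bindnolambda{f(x)}g)}$ is precisely the statement that $(-)\sequencing g$ preserves the finite union $\mustcup_{x\in X}$, and the $\liftstep{}$ case goes through under the tick. Hence your homomorphism section and your non-associativity section cannot both be sound, and the counterexample above (equivalently, the one displayed after Proposition~\ref{prop:dist:law}, instantiated at $a = \liftnow{\{a_0,b_0\}}$) shows it is the $\mustcup$-homomorphism that fails. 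Finally, your Beck-style reduction of non-associativity to the fourth diagram of Proposition~\ref{prop:dist:law} is heavier than needed and only sketched (it relies on the compatibility conditions of Beck's correspondence and on recovering $\dist$ from the composite multiplication, which you do not check); the direct route is to instantiate associativity at $a = \liftnow{\{a_0,b_0\}}$ with $f$ and $g$ chosen so that one branch delays before, and the other after, the second bind.
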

Since the associativity axiom is not used in our development, the proofs done in
the previous section for a general monad $\gMon$ carry over to this case, as we shall see. 

\section{Applicative must-simulation}
\label{sec:app:must:bisim}

We now show how our techniques from the may-convergence case apply to 
show that applicative must-similarity is a congruence also in the case of must-convergence.
First we set up the operational semantics. In the case of the standard big-step semantics, 
define the predicate $\BSmust \subseteq \Lambda \times \Pfin (\Val)$ as 
\begin{mathpar}
  \infer{M \BSmust X \\ N \BSmust Y}{M \oor N \BSmust X \cup Y} \and
  \infer{ }{\olam x M \BSmust \{\olam x M\}} \and
  \infer{M \BSmust X \\ N \BSmust Y \\ \forall (\olam y{M'}) \in X, V \in Y.~ M' \subst Vy \BSmust Z_{\olam y{M'},V}}{M\,N \BSmust \cup_{V' \in X, V \in Y} Z_{V',V}} 
\end{mathpar}
The judgement $M \BSmust X$ states that $M$ must converge and that the possible values 
that it can converge to is $X$.

The evaluation function $\eval : \Lambda \to \Pmust(\Val)$
is defined by specialising the general definition given in Section~\ref{sec:app:may:bisim}. 
We also define a relation $M \BSopmust[\kappa] Q$ stating 
that if $M$ terminates, it will terminate to a set of values satisfying $Q : \Pfin(\Val) \to \Prop$. 
Note that $Q$ is a predicate on sets of values, rather than values themselves (as was the case for
$M \BSopmay[\kappa] Q$). This allows us to express properties e.g. by existential quantification over 
outcome values, as needed e.g. in the definition of must-similarity below. 
To define $M \BSopmust[\kappa] Q$, consider first a lifting $\LPred Q$ of predicates $Q : A \to \Prop$ to 
$\Lift A$ defined as 
\begin{align*}
 \LPred Q(\liftnow a) & \defeq Q(a) & \LPred Q(\liftstep a) & \defeq \latbind \tickA\kappa{\LPred Q(\tapp a)}
\end{align*}
and note that this also satisfies items~\ref{item:Lift:comp} and \ref{item:Lift:bind} of Definition~\ref{def:pred:lifting}.
Define 
$M \BSopmust[\kappa] Q \defeq \LPred Q(\eval (M))$. 

The relationship between these two operational
semantics is similar to the one between $\BSmay$ and $\BSopmay[\kappa]$. First define, for 
%
%Recall \rasmus{From where?} that $\forall\kappa . \Lift A$ is the coinductive solution to $X \equi X + A$. 
%This type is to be thought of a the type of computations that either take a computation step or terminate with an $A$
%value. We now define a predicate on this type capturing termination. 
$m : \forall \kappa. \Lift[\kappa] A$ and $a : A$ a termination predicate 
$m \BSmustforall a$ as an inductive family in $\Prop$ like so:
\begin{mathpar}
\infer{m \BSmustforall a}{(\lambda \kappa.\, \liftstep{(\tabs \_ \kappa {m\,\kappa})}) \BSmustforall a}
\and
\infer{ }{(\lambda \_.\, \liftnow a) \BSmustforall a} 
\end{mathpar}

\begin{proposition} \label{lem:eval:BS}
  The statements $M \BSmust V$ and $(\lambda \kappa.\,\eval\,M) \BSmustforall V$ are logically equivalent.
\end{proposition}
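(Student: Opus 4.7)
The plan is to prove both implications separately, with the forward direction ($M \BSmust V$ implies $(\lambda\kappa.\,\eval\,M) \BSmustforall V$) by induction on the derivation of $M \BSmust V$, and the backward direction by induction on the derivation of $\BSmustforall$. Here $V$ ranges over $\Pfin(\Val)$, since $\eval\,M : \Lift(\Pfin(\Val))$ and $\BSmust$ relates terms to finite sets of values.

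Before attacking either direction, I would establish two structural lemmas about how $\BSmustforall$ interacts with the operations of $\Pmust$. The first says $\BSmustforall$ is preserved by $\mustcup$: if $a \BSmustforall X$ and $b \BSmustforall Y$ (for $a, b : \forall \kappa.\, \Pmust(\Val)$), then $(\lambda\kappa.\, a\kappa \mustcup b\kappa) \BSmustforall (X \cup Y)$. This is proved by double induction on the two derivations, using the three clauses defining $\mustcup$ on $\Pmust(\Val) = \Lift(\Pfin(\Val))$: the $\liftnow/\liftnow$ case matches the base rule of $\BSmustforall$, and the mixed cases use the crucial equations (\ref{eq:must:step:step}) and (\ref{eq:must:step:now}) to push steps outside so the inductive case of $\BSmustforall$ applies. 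The second lemma handles bind: if $a \BSmustforall X$ and $f(V) \BSmustforall Y_V$ for every $V \in X$, then $(\lambda\kappa.\, \bind{a\kappa}{V}{f(V)\kappa}) \BSmustforall \cup_{V\in X} Y_V$. This follows by induction on $a \BSmustforall X$ using the Lift-bind unfolding together with the fact that bind distributes over $\mustcup$, plus the first lemma.

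With these in hand, the forward direction is routine: in the $\olam x M$ case $\eval(\olam x M) = \liftnow\{\olam xM\}$ and the second rule of $\BSmustforall$ applies; the $\oor$ case uses the first lemma directly; and the application case, which is the most involved, unfolds $\eval(MN)$ as a nested bind, applies the second lemma twice (once for $\eval\,M$ and once for $\eval\,N$), and then uses the step rule of $\BSmustforall$ together with the induction hypotheses on $M'[V/x] \BSmust Z_{\olam y{M'},V}$ indexed by pairs of values from the output sets of $M$ and $N$.

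For the backward direction, the main obstacle is inverting the shape of $\eval\,M$ to determine which rule of $\BSmust$ applies; this is where the asymmetry between syntax (an inductive type) and semantics (a guarded recursive, quotient-like type) bites hardest. I would proceed by induction on $\BSmustforall$ and case-split on the head constructor of $M$: for $\olam x M$ the value $\eval M = \liftnow\{\olam x M\}$ is already in the base case and the conclusion is immediate; for $M \oor N$, prove (by induction on $\BSmustforall$) an inversion lemma saying that if $(\lambda\kappa.\, a\kappa \mustcup b\kappa) \BSmustforall Z$ then $Z$ can be decomposed as $X \cup Y$ with $a \BSmustforall X$ and $b \BSmustforall Y$, then apply the $\oor$ rule; for application, a similar inversion lemma for bind, combined with clock-irrelevance of $\Pfin(\Val)$ to commute $\forall\kappa$ past $\Pfin$ and retrieve the outer values classically, lets us pick the appropriate $X$, $Y$ and $Z_{\olam y {M'}, V}$ and apply the application rule. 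The trickiest point is that the inversion lemmas must respect the equations (\ref{eq:must:step:step}) and (\ref{eq:must:step:now}) defining $\mustcup$ on $\Pmust$, so they must be set up as statements about arbitrary elements of $\Pmust(\Val)$ rather than canonical representatives, and proved by guarded recursion combined with induction on the $\BSmustforall$ derivation.
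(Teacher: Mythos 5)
Your overall decomposition is the natural one, and the forward direction is fine: induction on the derivation of $M \BSmust X$, with auxiliary lemmas showing that $\BSmustforall$ is compatible with $\mustcup$ (via equations (\ref{eq:must:step:step}) and (\ref{eq:must:step:now})) and with the bind of $\Pmust$. The statements of your inversion lemmas for the backward direction are also correct, and you rightly note that case analysis on global elements needs clock irrelevance of $\Pfin(\Val)$ (via $\forall\kappa.\Lift[\kappa]A \simeq A + \forall\kappa.\Lift[\kappa]A$, using $\force$) --- though this is needed already in the $\oor$ case, not only for application.

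The genuine gap is in how the backward induction is organised. You say you proceed ``by induction on the $\BSmustforall$ derivation and case-split on $M$'', but the induction hypotheses you need are applied to derivations \emph{produced by the inversion lemmas}, which are not subderivations of the one you are inducting on; so structural induction on the derivation does not license them. Term-structural induction rescues the $\oor$ case and the first two uses in the application case (for $M_1$ and $M_2$), but not the crucial one: the derivation of $(\lambda\kappa.\eval(M'\subst{V}{x})) \BSmustforall Z_{\olam y{M'},V}$ concerns a term that is not a subterm of $M_1\,M_2$. The fix is quantitative: prove the inversion lemmas in a form that tracks the number of uses of the step rule (inversion of $\mustcup$ and of bind does not increase it, and peeling the explicit $\gstep$ in the application clause of $\eval$ strictly decreases it), and run the backward direction by strong induction on this step count, lexicographically combined with the size of $M$. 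Relatedly, your final remark that the inversion lemmas are proved ``by guarded recursion'' is off target: their statements quantify over all clocks and contain no free clock variable, so there is no $\later^\kappa$ to recurse under; the only well-founded structure available there is the inductive $\BSmustforall$ derivation itself (together with the global case analysis just mentioned). You will also want uniqueness of $X$ in $m \BSmustforall X$ (by now/step disjointness) to collapse the possibly different sets $Y_w$ extracted for different $\olam y{M'} \in X$ into the single premise $N \BSmust Y$ of the application rule.
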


\begin{lemma} \label{lem:elim:kappa:Lf}
  Let $A$ be clock irrelevant, $Q^\kappa$ a family over $A$, and $m : \forall \kappa.\, \Lift[\kappa] A$.
  The statements $\forall\kappa . \Lf Q^\kappa {(m\,\kappa)}$ and $m \BSmustforall a \to \forall\kappa . Q^\kappa(a)$ 
  are logically equivalent. As a consequence the statements 
  $M \BSmust V \to \forall\kappa . Q^\kappa(V)$ and 
   $\forall\kappa . M\BSopmust[\kappa] Q^\kappa$ are equivalent.
%  Moreover, $m \BSmustforall a \to Q(a)$ implies $\Lf Q {(m\,\kappa)}$.
  \end{lemma}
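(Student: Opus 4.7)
The proof will closely mirror that of Lemma~\ref{lem:convmay:elim}, with two directions treated differently: the forward direction by induction on $\BSmustforall$, and the backward direction by guarded recursion together with a case split on the coinductive structure of $m$.

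For the forward direction, assume $\forall\kappa.\,\Lf Q^\kappa(m\,\kappa)$, and proceed by induction on the derivation of $m \BSmustforall a$. In the base case $m = \lambda\_.\,\liftnow a$, the definition of $\LPred$ gives $\Lf Q^\kappa(\liftnow a) = Q^\kappa(a)$, so the conclusion is immediate. In the step case where $m = \lambda\kappa.\,\liftstep(\tabs\_\kappa{(m'\,\kappa)})$ with $m' \BSmustforall a$, the definition of $\LPred$ reduces the hypothesis to $\forall\kappa.\,\latbind\tickA\kappa\,\Lf Q^\kappa(m'\,\kappa)$. Since the body does not mention $\tickA$, $\force$ gives $\forall\kappa.\,\Lf Q^\kappa(m'\,\kappa)$, so the induction hypothesis applies and yields $\forall\kappa.\,Q^\kappa(a)$.

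For the backward direction, assume that $m \BSmustforall a \to \forall\kappa.\,Q^\kappa(a)$. The plan is to proceed by guarded recursion, fixing a clock $\kappa$ and proving $\Lf Q^\kappa(m\,\kappa)$. The key is to case-split on the shape of $m : \forall\kappa.\,\Lift[\kappa] A$ using that $A$ is clock irrelevant: combining the equivalences $\forall\kappa.\,(B + C) \simeq \forall\kappa.\,B + \forall\kappa.\,C$, clock irrelevance $A \simeq \forall\kappa.\,A$, and $\force$, we obtain $\forall\kappa.\,\Lift[\kappa] A \simeq A + \forall\kappa.\,\Lift[\kappa] A$, so that $m$ is either of the form $\lambda\kappa.\,\liftnow a$ for some $a : A$, or of the form $\lambda\kappa.\,\liftstep(\tabs\_\kappa{(m'\,\kappa)})$ for some $m'$. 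In the first case $m \BSmustforall a$ holds by the base rule, so by assumption $\forall\kappa.\,Q^\kappa(a)$, which is exactly $\forall\kappa.\,\Lf Q^\kappa(\liftnow a)$. In the second case, the assumption on $m$ transfers to $m'$ (since $m \BSmustforall a$ iff $m' \BSmustforall a$), so the guarded recursion hypothesis at type $\later^\kappa(\text{the goal for }m')$ gives $\latbind\tickA\kappa\,\Lf Q^\kappa(m'\,\kappa)$, which is $\Lf Q^\kappa(m\,\kappa)$ by definition of $\LPred$.

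The main obstacle is setting up the case analysis in the backward direction: it is precisely here that clock irrelevance of $A$ is used, via the coinductive unfolding of $\forall\kappa.\,\Lift[\kappa] A$. Once the case split is available, the rest is routine guarded recursion combined with the equations defining $\LPred$. Finally, the consequence about $\BSmust$ and $\BSopmust$ follows by instantiating the equivalence at $m = \lambda\kappa.\,\eval\,M$, noting that $\Pfin(\Val)$ is clock irrelevant since $\Val$ is and clock irrelevance is preserved by $\Pfin$, and applying Proposition~\ref{lem:eval:BS} to translate $m \BSmustforall V$ into $M \BSmust V$.
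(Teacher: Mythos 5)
Your proof is correct and follows the intended route: the forward direction by induction on the derivation of $m \BSmustforall a$, using $\force$ to strip the $\later^\kappa$ under the clock quantifier, and the backward direction by guarded recursion after case-splitting $m$ via the unfolding $\forall\kappa.\,\Lift[\kappa]A \simeq A + \forall\kappa.\,\Lift[\kappa]A$, which is exactly where clock irrelevance of $A$ is needed; the consequence is then obtained from Proposition~\ref{lem:eval:BS} together with clock irrelevance of $\Pfin(\Val)$. Just make sure the L\"ob induction is set up for the statement universally quantified over $m$ (so the hypothesis can be instantiated at $m'$), and note that only the direction $m' \BSmustforall a \Rightarrow m \BSmustforall a$ of your ``iff'' is actually needed.
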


%\begin{lemma} \label{lem:elim:kappa:BS:must}
%  Let $Q^\kappa$ be a predicate over $\Pfin(\Val)$. Then the statements 
%  $M \BSmust V \to \forall\kappa . Q^\kappa(V)$ and 
%   $\forall\kappa . M\BSopmust[\kappa] Q^\kappa$ are equivalent.
%\end{lemma}
%
%\rasmus{Should we also describe the relationship in terms of termination on the coinductive type? If so,
%should we also do it for the may-case? It is not really needed for rest of the proofs, but it is also a nice way
%of expressing it.}

Say that a relation $R$ on closed terms is an applicative must-simulation if $M R N$ implies
\[M \BSmust U \to \exists V.\,N \BSmust V \hmeet \forall (\olam x {N'} \in V).\, \exists (\olam x {M'} \in U).\,
 (\forall (W : \Val) . M'\subst {W} x \,R\, N'\subst {W} x)\]
Define $M \appsimmust[\kappa] N$ by guarded recursion to be
\[
 M \BSopmust[\kappa] \lambda U .\, \exists V. N \BSmust V \hmeet \forall (\olam y{N'} \in V).\,
\exists (\olam x {M'} \in U).\, \laterAbs\tickA\kappa 
 (\forall W . M'\subst {W}x \appsimmust[\kappa] N'\subst {W}x)
\]
This is extended to open terms by defining $M \appsimmust[\kappa] N$ to mean $M\sigma \appsimmust[\kappa] N\sigma$
for all substitutions $\sigma$ mapping all free variables in $M$ and $N$ to closed terms. Write $M \appsimmust N$ for
$\forall\kappa . M \appsimmust[\kappa] N$. %Then

\begin{lemma} \label{lem:appbisim:great:must}
 $\appsimmust$ is the greatest applicative must-simulation. 
% Moreover, $M \appsimmust N$ holds if and only if 
% \[M \BSmust \lambda(\olam x{M'}) \to N \BSmust \olam y{N'} \hmeet 
% (\forall P . M'\subst Px \appsimmust N'\subst Px)\]
\end{lemma}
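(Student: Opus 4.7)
The plan is to mirror the proof of Lemma~\ref{lem:appbisim:great}, replacing the may-convergence ingredients with their must-counterparts. There are two directions to establish: that $\appsimmust$ is itself an applicative must-simulation, and that it contains every other applicative must-simulation.

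For the first direction, assume $M \appsimmust N$, i.e.\ $\forall \kappa.\,M \appsimmust[\kappa] N$, which unfolds to $\forall \kappa.\, M \BSopmust[\kappa] Q^\kappa$ for the predicate $Q^\kappa(U) \defeq \exists V.\, N \BSmust V \hmeet \forall (\olam y {N'} \in V).\, \exists (\olam x{M'} \in U).\, \latbind\tickA\kappa (\forall W.\, M'\subst Wx \appsimmust[\kappa] N'\subst Wx)$. Assuming $M \BSmust U$ and applying Lemma~\ref{lem:elim:kappa:Lf}, we obtain $\forall \kappa.\, Q^\kappa(U)$. The outer witness $V$ can be pulled out of $\forall\kappa$ because $\BSmust$ is deterministic on its result and $\Pfin(\Val)$ is clock irrelevant (via $\forall\kappa.\,\Pfin(A) \simeq \Pfin(\forall\kappa.\,A)$ together with clock irrelevance of $\Val$), and similarly for the inner existentials over $\olam x{M'} \in U$, since $U$ is a fixed finite set. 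This leaves $\forall \kappa.\,\latbind\tickA\kappa(\forall W.\,M'\subst Wx \appsimmust[\kappa] N'\subst Wx)$, from which $\force$ yields $\forall W.\,M'\subst Wx \appsimmust N'\subst Wx$, as required.

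For maximality, let $R$ be any applicative must-simulation, and construct for each $\kappa$ a proof of $\forall M\,N.\, M R N \to M \appsimmust[\kappa] N$ by $\fix^\kappa$, with guarded hypothesis $\Phi : \latbind\tickA\kappa (\forall M\,N.\, M R N \to M \appsimmust[\kappa] N)$. Given $M R N$ we must establish $M \BSopmust[\kappa] Q$ for the same $Q$. Since $R$ is a must-simulation, whenever $M \BSmust U$ we obtain $V$ with $N \BSmust V$ and, for each $\olam y {N'} \in V$, some $\olam x{M'} \in U$ satisfying $\forall W.\, M'\subst Wx\, R\, N'\subst Wx$. Feeding the delayed form of this fact (via $\nxte$) into $\Phi$ using the applicative structure of $\later$ yields $\latbind\tickA\kappa(\forall W.\,M'\subst Wx \appsimmust[\kappa] N'\subst Wx)$, which is exactly what $Q(U)$ still requires. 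To turn the resulting implication $M \BSmust U \Rightarrow Q(U)$ into $M \BSopmust[\kappa] Q = \LPred Q(\eval\,M)$ we use a pointwise instance of Lemma~\ref{lem:elim:kappa:Lf}, immediate from the recursive definition of $\LPred$ together with Proposition~\ref{lem:eval:BS}. This gives $M \appsimmust[\kappa] N$ at every $\kappa$, hence $M \appsimmust N$.

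The main obstacle is the first direction, specifically the commutation of $\forall\kappa$ with the existential quantifications in $Q^\kappa$: this is where the specifics of must-convergence enter, via finite sets of result values, the clock-irrelevance of $\Pfin(\Val)$, and the determinism of $\BSmust$ on its result. Once the witnesses have been extracted uniformly in $\kappa$, the rest proceeds by the same guarded-recursive pattern as in the may case.
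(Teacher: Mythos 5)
Your first direction is essentially sound: using the consequence of Lemma~\ref{lem:elim:kappa:Lf} at the global level to get $M \BSmust U \to \forall\kappa.\,Q^\kappa(U)$, and then extracting the existential witnesses uniformly in $\kappa$ (the outer $V$ via determinism of $\BSmust$ -- itself a small induction you elide -- or via the fact that clock quantification commutes with truncation and with $\Sigma$ over the clock-irrelevant type $\Pfin(\Val)$; the inner witness via commutation of $\forall\kappa$ with finite disjunctions), followed by $\force$. Those commutation steps deserve to be spelled out, but the ideas are the right ones.

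The maximality direction, however, contains a genuine gap at its key step. You claim that the implication $(\forall U.\, M \BSmust U \to Q(U)) \Rightarrow M \BSopmust[\kappa] Q$, for a \emph{fixed} clock $\kappa$, is ``a pointwise instance of Lemma~\ref{lem:elim:kappa:Lf}, immediate from the recursive definition of $\LPred{}$ together with Proposition~\ref{lem:eval:BS}.'' There is no pointwise instance of that lemma: both Lemma~\ref{lem:elim:kappa:Lf} and Proposition~\ref{lem:eval:BS} are inherently statements about $\forall\kappa$-quantified computations and the inductively defined convergence predicate $\BSmustforall$; as the paper notes, termination cannot even be expressed as a predicate on $\Lift\Pfin(\Val)$ at a single clock, so you cannot case on whether $\eval\,M$ converges in order to discharge $\LPred{Q}(\eval\,M)$ from a hypothesis that is conditional on big-step convergence. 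The needed transfer at a fixed clock is true, but it is a substantive lemma in its own right: it must be proved by structural induction on $M$ combined with L\"ob induction (the structural induction handling the unguarded evaluation of subterms in the application case, the L\"ob hypothesis handling the $\gstep$-guarded evaluation of $M'\subst{V}{x}$), and it additionally needs facts about how $\LPred{}$ interacts with the bind of $\Pmust$ and with the step-synchronising union $\mustcup$ -- the latter is \emph{not} a simple conjunction law as in the may case, since $\mustcup$ runs both branches to completion and $Q$ is a predicate on the union of the resulting finite sets. Your L\"ob set-up for maximality is otherwise the right skeleton, but without this auxiliary fixed-$\kappa$ lemma the argument does not go through; note also that the alternative of applying Lemma~\ref{lem:elim:kappa:Lf} globally instead does not avoid the issue, since inside the L\"ob step one is again at a fixed clock.
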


Also in the case of the denotational semantics the general case described in 
Section~\ref{sec:denotational:semantics} specialises to $\Pmust$. None of the proofs
or constructions rely on associativity of the bind operation, so also the soundness result 
holds. For our applications of the denotational semantics, however, we need a variant 
of Corollary~\ref{cor:soundness} which applies to predicates on sets of values 
rather than on values themselves. This uses an infix notation
$m \Sconv Q$ for $\LPred Q(m)$.

\begin{corollary} \label{cor:soundness:must}
 The statements $\den M \Sconv Q$  and 
 $M \BSopmust[\kappa] Q \circ \Pfin(\denv{-})$ are equivalent.
\end{corollary}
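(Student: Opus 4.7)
The proof will follow the same shape as that of Corollary~\ref{cor:soundness}, only now the predicate $Q$ lives over $\Pfin(\SVal)$ rather than over $\SVal$, and the lifting $\LPred$ replaces the lifting $\gLift$ of $\Pmay$. The plan is to chain together soundness with the naturality/functoriality property of $\LPred$, so the whole argument reduces to a short calculation.

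First I would unfold the defined notation on both sides: $\den{M}\Sconv Q$ is by definition $\LPred{Q}(\den{M})$, while $M\BSopmust[\kappa] (Q\circ\Pfin(\denv{-}))$ is by definition $\LPred{Q\circ\Pfin(\denv{-})}(\eval\,M)$. Next I would invoke Theorem~\ref{thm:soundness}, which specialised to $\gMon = \Pmust = \Lift\Pfin$ gives the equality $\den{M} = \Lift(\Pfin(\denv{-}))(\eval\,M)$. Substituting into the left-hand side we obtain $\LPred{Q}(\Lift(\Pfin(\denv{-}))(\eval\,M))$.

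The final step is to apply the functoriality property of $\LPred$. The excerpt records, immediately after the definition of $\LPred$, that $\LPred$ satisfies items~\ref{item:Lift:comp} and~\ref{item:Lift:bind} of Definition~\ref{def:pred:lifting}; item~\ref{item:Lift:comp} is precisely the statement that $\LPred{Q}(\Lift(f)(a)) = \LPred{Q\circ f}(a)$. Applying it with $f = \Pfin(\denv{-})$ rewrites the expression to $\LPred{Q\circ\Pfin(\denv{-})}(\eval\,M)$, which is exactly the right-hand side. Since each step is an equality of propositions, the two statements are equivalent (in fact equal).

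There is no real obstacle here; the only thing to double-check is type correctness of the composition $Q\circ\Pfin(\denv{-})$, where $\denv{-} : \Val \to \SVal$ lifts to $\Pfin(\denv{-}) : \Pfin(\Val) \to \Pfin(\SVal)$, matching the domain of $Q : \Pfin(\SVal) \to \Prop$. So the corollary truly is an immediate consequence of soundness together with the functoriality clause for $\LPred$, exactly paralleling the proof of Corollary~\ref{cor:soundness} for the may case.
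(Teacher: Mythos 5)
Your proof is correct and is exactly the argument the paper intends: unfold $\Sconv$ and $\BSopmust[\kappa]$ via $\LPred$, apply Theorem~\ref{thm:soundness} specialised to $\Pmust = \Lift\Pfin$ (which the paper notes holds since associativity of bind is never used), and conclude with item~\ref{item:Lift:comp} of Definition~\ref{def:pred:lifting}, which $\LPred$ is explicitly noted to satisfy. Your type-check of $Q \circ \Pfin(\denv{-})$ is also right, so there is nothing to add.
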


\subsection{Relating syntax and semantics}
\label{sec:relating:syntax:sem:must}

As in the case of may-convergence we now construct a relation between syntax and semantics. 
To simplify syntax we introduce the lifting of a relation 
$R : X \times Y \to \Prop$ to a relation on 
powersets $\Pfinrel(R) : \Pfin X \times \Pfin Y \to \Prop$ defined as
\[
  \Pfinrel(R)(A, B) = \forall b \in B \exists a\in A . R( a , b)
\]
We define two relations between syntax and semantics by mutual guarded recursion (overwriting 
notation from Section~\ref{sec:relating:syntax:semantics}):
\begin{align*}
 \SSrel & : \D \times \Lambda \to \Prop \\
 \SSrelV & : \SVal \times \Val \to \Prop \\ 
 d \SSrel M & \,\defeq d \Sconv \lambda A . \exists B. M \BS B \hmeet \Pfinrel(\SSrelV)(A,B) \\
 v \SSrelV \olam xM & \,\defeq  \latbind\tickA\kappa{\forall v', V' . v' \SSrelV V' \to ({\tapp v}(v')) \SSrel M \subst {V'} x))}
\end{align*}
%This is well-defined, because unfolding the definition of $\SSrel$ in the definition of $\SSrelV$ 
%gives a guarded recursive definition of $\SSrelV$. 
If $\rho : (\SVal)^n$ and $\sigma : \Val^n$ write $\rho \SSrel \sigma$ to mean 
$\rho_1 \SSrel \sigma_1 \hmeet \dots \hmeet \rho_n \SSrel \sigma_n$. 

\begin{lemma}[Fundamental lemma] \label{lem:fundamental:must}
 If $\Gamma \vdash M$ and $\rho \SSrel \sigma$ then $\den M \rho \SSrel M\sigma$.
\end{lemma}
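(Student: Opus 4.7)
The plan is to mirror the proof of Lemma~\ref{lem:fundamental} by structural induction on $M$, handling the base and lambda cases directly and reducing the application and choice cases to two compatibility statements:
\begin{enumerate}
 \item if $d_1 \SSrel M_1$ and $d_2 \SSrel M_2$, then $\SApp{d_1}{d_2} \SSrel M_1\,M_2$;
 \item if $d_1 \SSrel M_1$ and $d_2 \SSrel M_2$, then $d_1 \mustcup d_2 \SSrel M_1 \oor M_2$;
\end{enumerate}
each established by a separate guarded recursion, independent of the outer induction on $M$.

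For the variable case $M = x_i$ we have $\den{x_i}\rho = \liftnow(\{\rho_i\})$ and $x_i\sigma = \sigma_i$. Since $\sigma_i$ is a value, $\sigma_i \BSmust \{\sigma_i\}$, so all that remains is to verify $\Pfinrel(\SSrelV)(\{\rho_i\}, \{\sigma_i\})$, which is the componentwise hypothesis. For $M = \olam x N$ it suffices to show $\tabs\tickA\kappa(\lambda d.\,\den{N}(\rho,d)) \SSrelV \olam x (N\sigma)$; unfolding the definition of $\SSrelV$ on lambdas and simplifying the tick application reduces this to producing, later, a proof of $\den{N}(\rho,v') \SSrel N(\sigma,V')$ for any $v' \SSrelV V'$, which is the inductive hypothesis applied to $N$ with the extended environments $(\rho,v')$ and $(\sigma,V')$.

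The application compatibility is obtained by unfolding $\SApp{d_1}{d_2}$ via bind and pushing the predicate $P(A) = \exists B.\,M_1\,M_2 \BSmust B \hmeet \Pfinrel(\SSrelV)(A,B)$ through using properties~(\ref{item:Lift:comp})--(\ref{item:Lift:bind}) of $\LPred$, which as noted in Section~\ref{sec:app:must:bisim} still hold in the must setting; the resulting later goal is then discharged by the definition of $\SSrelV$ on lambdas combined with the guarded recursive hypothesis and the big-step rule for application.

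I expect the main obstacle to be the compatibility of $\mustcup$, since in $\Pmust$ the join is not pointwise set union but is defined by guarded recursion using the step-bubbling equations~(\ref{eq:must:step:step}) and~(\ref{eq:must:step:now}). I plan to proceed by case analysis on the shape of $d_1$ and $d_2$ as elements of $\Lift(\Pfin(\SVal))$. The $(\liftnow X_1, \liftnow X_2)$ case reduces to showing that, given $M_i \BSmust B_i$ and $\Pfinrel(\SSrelV)(X_i, B_i)$, we have $M_1 \oor M_2 \BSmust B_1 \cup B_2$ (by the big-step rule for $\oor$) and $\Pfinrel(\SSrelV)(X_1 \cup X_2, B_1 \cup B_2)$. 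The pure step-step case unfolds $\mustcup$ to a single $\liftstep$ and applies the guarded hypothesis under the tick. The mixed cases use the symmetric counterpart of~(\ref{eq:must:step:now}) to delay the $\liftnow$ side by one computation step so that the guarded hypothesis can again be invoked; combining these closes the induction.
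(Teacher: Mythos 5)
Your top-level plan (induction on $M$, with the variable and lambda cases done directly and application and choice reduced to compatibility statements proved by guarded recursion) matches the intended proof, and your variable, lambda and choice cases are essentially correct. The genuine gap is in the application case, which is exactly where the paper says the work lies. In the must setting the bind occurring in $\SApp{d_1}{d_2}$ is the bind of $\Pmust = \Lift\Pfin$, i.e.\ $\Lbind{a}{X}{\mustcup_{x\in X} f(x)}$: an $\Lift$-bind followed by a finite union $\mustcup_{x \in X}$ of elements of $\Pmust$. Items~\ref{item:Lift:comp}--\ref{item:Lift:bind} of Definition~\ref{def:pred:lifting} for $\LPred$ only let you push the predicate through the $\Lift$-bind; they say nothing about the $\mustcup_{x\in X}$ layer, and the cup law (item~\ref{item:Lift:cup}) is simply false for $\LPred$ and $\mustcup$ (e.g.\ $\LPred Q(\liftnow{X} \mustcup \liftnow{Y}) = Q(X \cup Y)$, not $Q(X) \hmeet Q(Y)$) --- that failure is the whole point of the must-powerdomain. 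So ``pushing the predicate through as in the may case'' does not close the case: after the $\Lift$-binds you are left with goals of the form $\LPred P\bigl(\mustcup_{f\in X}\mustcup_{v\in Y}\muststep(\dots)\bigr)$ where $X,Y : \Pfin(\SVal)$ are abstract sets known only to be $\Pfinrel(\SSrelV)$-related to the syntactic result sets $B_1, B_2$, and you must match this semantic union against the big-step rule $M_1\,M_2 \BSmust \cup_{V'\in B_1, V\in B_2} Z_{V',V}$. This needs exactly the ingredient your proposal never invokes: that $\Pfinrel$ respects the monad structure of $\Pfin$ (related inputs are sent by the Kleisli extensions / indexed unions to related result sets), combined with an indexed version of your binary $\mustcup$ analysis, proved by the propositional elimination principle of the HIT $\Pfin$ (you cannot do naive structural induction on $X$), tracking the existentials over $B$ through the unions.

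A smaller misjudgement is where you expect the difficulty: your treatment of the choice case is fine (and its binary step-bubbling lemma is in fact an ingredient of the application case), whereas the application case is the one requiring work, for the reason above. Your reading of the hypothesis $\rho \SSrel \sigma$ as componentwise $\rho_i \SSrelV \sigma_i$ is the intended one.
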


The proof of Lemma~\ref{lem:fundamental:must} is by induction on $M$. In particular the
case of application requires some work, and relies on the fact that $\Pfinrel$ respects the monad 
structure of $\Pfin$ in the sense that if $f: X \to \Pfin(X')$ and $g: Y \to \Pfin(Y')$ map 
pairs related in $R : X\times Y \to \Prop$ to pairs related in $\Pfinrel(S)$, then 
the extensions $\overline f: \Pfin (X) \to \Pfin(X')$ and $\overline g: \Pfin (Y) \to \Pfin(Y')$
map pairs related in $\Pfinrel(R)$ to pairs related in $\Pfin(S)$.

\begin{lemma} \label{lem:appsim:SSrel:must}
 $M \appsimmust N$ iff $\forall\kappa . \den M \SSrel N$. 
\end{lemma}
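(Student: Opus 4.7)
The plan follows the template of Lemma~\ref{lem:app:sim:is:SSRel} from the may-convergence setting. The forward direction reduces to an upward closure property of $\SSrel$ with respect to applicative must-similarity, while the backward direction proceeds by a direct guarded recursion.

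For the forward direction I would establish, by mutual guarded recursion, the two closure statements: if $d \SSrel M$ at clock $\kappa$ and $M \appsimmust N$ then $d \SSrel N$ at $\kappa$, and analogously for $\SSrelV$ on values. The value case unfolds $\olam x{M'} \appsimmust \olam y{N'}$ --- where $\eval$ of a value is simply $\mustnow$, simplifying the definition to $\later^\kappa(\forall U.\,M'[U/x] \appsimmust[\kappa] N'[U/y])$ --- and combines it with the $\later$-unfolding of $v \SSrelV \olam x{M'}$ and the guarded induction hypothesis. For the term case, I unfold $d \SSrel M$ via its defining $\Sconv$, use Lemma~\ref{lem:elim:kappa:Lf} together with Proposition~\ref{lem:eval:BS} to extract from $M \appsimmust N$ the classical statement that $M \BSmust B$ implies some $B'$ with $N \BSmust B'$ and pointwise must-similarity obligations, and apply the value case through $\Pfinrel(\SSrelV)$. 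The forward direction of the lemma then follows by combining the Fundamental Lemma~\ref{lem:fundamental:must} (which gives $\den M \SSrel M$ on the empty environment) with the upward closure applied to $M \appsimmust N$.

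For the backward direction I proceed by guarded recursion on the implication, fixing $\kappa$ and aiming for $M \appsimmust[\kappa] N$. Corollary~\ref{cor:soundness:must} translates the hypothesis $\den M \SSrel N$ into $M \BSopmust[\kappa] (\lambda U.\,\exists B.\,N \BSmust B \wedge \Pfinrel(\SSrelV)(\Pfin(\denv{-})(U),B))$, while the goal $M \appsimmust[\kappa] N$ is $M \BSopmust[\kappa]$ of a similar predicate whose inner condition demands, for each $\olam y{N'} \in B$, some $\olam x{M'}\in U$ with $\later^\kappa(\forall W.\,M'[W/x] \appsimmust[\kappa] N'[W/y])$. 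I would transform the first predicate into the second pointwise: equation~(\ref{eq:image:powset}) identifies each element of the image $\Pfin(\denv{-})(U)$ as $\denv{\olam x{M'}}$ for some $\olam x{M'}\in U$; unfolding $\denv{\olam x{M'}} \SSrelV \olam y{N'}$ and instantiating $v' = \denv W$, $V' = W$ (related by the Fundamental Lemma applied to values) yields $\later^\kappa(\den{M'}(\denv W) \SSrel N'[W/y])$, which equals $\later^\kappa(\den{M'[W/x]} \SSrel N'[W/y])$ by the routine substitution property of the semantics. The guarded induction hypothesis then converts this inner $\SSrel$ into $\appsimmust[\kappa]$ as required.

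The main obstacle is the mismatch between the inductively defined, clock-free must-convergence $\BSmust$ and the guarded $\BSopmust[\kappa]$ appearing inside $\appsimmust[\kappa]$. Lemma~\ref{lem:elim:kappa:Lf} is the essential bridging tool, and its application requires careful scoping of $\forall\kappa$, together with clock-irrelevance of $\Pfin(\Val)$ used to commute $\forall\kappa$ past the existentials over $B$, so that globally available data (the set $B$ with $N \BSmust B$) is extracted outside the $\later^\kappa$ while the recursive similarity obligations on continuations remain inside it. A minor but necessary technicality used throughout is the routine semantic substitution identity $\den{M[V/x]} = \den M(\denv V)$.
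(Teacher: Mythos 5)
Your proposal is correct and matches the paper's (only sketched) argument, which mirrors the may-convergence case: the forward direction via upward closure of $\SSrel$ in its second argument combined with the Fundamental Lemma, and the backward direction by guarded recursion using Corollary~\ref{cor:soundness:must}, equation~(\ref{eq:image:powset}), and the substitution property of the semantics. You also correctly identify the genuine technical point, namely bridging $\BSmust$ and $\BSopmust[\kappa]$ via Lemma~\ref{lem:elim:kappa:Lf}, Proposition~\ref{lem:eval:BS}, force, and clock irrelevance of $\Pfin(\Val)$ to commute $\forall\kappa$ past the existentials.
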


Similarly to the case of may-convergence, this implies that applicative may-similarity is a reflexive relation. From this 
it follows that it is a congruence exactly as in the proof of Theorem~\ref{thm:appsim:congruence}.
%
%\begin{corollary}\label{cor:app:bisim:refl}
% Applicative bisimilarity is a reflexive relation.
%\end{corollary}

\begin{theorem}
 $\appsimmust$ is a congruence, i.e., if $M \appsimmust N$ and $C[-]$ is a context then also
 $C[M] \appsimmust C[N]$. 
\end{theorem}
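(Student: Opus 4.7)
The plan is to mirror the proof of Theorem~\ref{thm:appsim:congruence} closely, replacing the may-ingredients with their must-counterparts, and then highlight where the added structure of $\Pmust$ demands extra care. The first step is to establish reflexivity of $\appsimmust$. By Lemma~\ref{lem:appsim:SSrel:must} it suffices to show, for every closed term $M$, that $\forall\kappa.\,\den{M}\SSrel M$. This follows from the fundamental lemma (Lemma~\ref{lem:fundamental:must}) once one checks, by guarded recursion, that $\denv V \SSrelV V$ for every value $V$: unfolding $\denv{\olam x M}$ and using the fundamental lemma on the body reduces this to $\denv{V'} \SSrelV V'$ later, which is exactly the recursive hypothesis.

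With reflexivity in hand, the congruence claim reduces, by induction on the context $C[-]$, to proving the three compatibility rules
\begin{align*}
 M \appsimmust N,\; M' \appsimmust N' & \implies M M' \appsimmust N N', \\
 M \appsimmust N,\; M' \appsimmust N' & \implies M \oor M' \appsimmust N \oor N', \\
 M \appsimmust N & \implies \olam x M \appsimmust \olam x N .
\end{align*}
The lambda case is immediate from the definition of $\appsimmust$ on open terms, noting that $\latbind\tickA\kappa{(\forall V.\,M\subst Vx \appsimmust[\kappa] N\subst Vx)}$ is guaranteed by the hypothesis. The application and choice cases I would reduce, via Lemma~\ref{lem:appsim:SSrel:must}, to the semantic statements: if $d \SSrel N$ and $d' \SSrel N'$ then $\SApp{d}{d'} \SSrel N N'$ and $d \cup d' \SSrel N \oor N'$. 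Each is proved by guarded recursion using the monad-like structure of $\Pmust$ (the bind equations from Section~\ref{sec:must:pow:dom} suffice, since associativity is not needed).

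The main obstacle will be the choice case, which is more delicate here than in the may setting because the equations (\ref{eq:must:step:step}) and (\ref{eq:must:step:now}) force the two branches to be evaluated in lock-step before their eventual value sets are merged. Unfolding $d \cup d' \Sconv Q$ through the definition of $\LPred$ therefore requires a compatibility lemma stating that whenever $\LPred{Q_1}(d)$ and $\LPred{Q_2}(d')$ hold, $\LPred{Q_{12}}(d \cup d')$ holds for the predicate $Q_{12}(X) \defeq \exists X_1 X_2.\,X = X_1 \cup X_2 \wedge Q_1(X_1) \wedge Q_2(X_2)$; this is established by guarded recursion, using the bubble-up equations to push outer steps of either operand past the $\cup$. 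Combined with the $\oor$-rule of $\BSmust$ and the fact that $\Pfinrel(\SSrelV)$ is preserved under pointwise union of its arguments, this yields the required witness set for $N \oor N'$ and completes the case. The application case is then proved by guarded recursion in exactly the manner of Theorem~\ref{thm:appsim:congruence}, using that $\SApp{}{}$ respects $\cup$ in each argument and that $\LPred$ commutes with bind.
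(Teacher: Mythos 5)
Your proposal is correct and follows essentially the same route as the paper: reflexivity via Lemma~\ref{lem:fundamental:must} together with Lemma~\ref{lem:appsim:SSrel:must}, the lambda case directly from the definition of $\appsimmust$ on open terms, and the application and choice cases reduced through Lemma~\ref{lem:appsim:SSrel:must} to semantic compatibility statements for $\SApp{}{}$ and $\mustcup$ proved by guarded recursion, exactly as in Theorem~\ref{thm:appsim:congruence}. Your explicit compatibility lemma for $\LPred$ against $\mustcup$ (using equations (\ref{eq:must:step:step}) and (\ref{eq:must:step:now})) and the closure of $\Pfinrel(\SSrelV)$ under unions are correct elaborations of details the paper leaves implicit under ``which can be proved by guarded recursion.''
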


\section{Conclusion}
\label{sec:conclusion}

The constructions of this paper illustrate how the combination of guarded 
recursion with higher inductive types and univalence in Clocked Cubical Type Theory
gives an expressive type theory for reasoning about programming languages. 
In particular, this combination allows arguments known from domain theory involving
constructions such as recursive types to be represented in type theory.
Moreover, the abstract setting of synthetic guarded domain theory allows for these
tools to be used in a much more elementary setting, far from the mathematical
complexity of domain theory. This is particularly clear in the construction of 
the relation $\appsimmay[\kappa]$ which in ordinary domain theory requires a
non-trivial existence argument~\cite{pitts1997note,pitts1996relational}. It also 
appears in our definitions of the guarded powerdomains, which we define much more
directly than the standard constructions in domain theory~\cite{domain:handbook}. 

It is unfortunate that the bind rule for $\Pmust$ is not associative. As mentioned,
this does not affect our constructions, and we conjecture that it is associative 
up to weak bisimilarity, and that this is enough for most purposes. We believe the 
reason for the failure of associativity is that the equality (\ref{eq:must:step:now})
is not algebraic in the sense that it only applies when one side is a value. One way 
to avoid this is to replace (\ref{eq:must:step:step}) and (\ref{eq:must:step:now}) by
an equation of the form 
\[
 \muststep (x) \mustcup y = \muststep(\tabs\tickA\kappa{(\tapp x \mustcup y)}) 
 \]
which means that to evaluate $x \cup y$ takes as many steps as the sum of steps
used to evaluate $x$ and $y$ respectively, rather than the maximum. In particular,
this means that idempotency is lost (but may hold up to weak bisimilarity) and one 
essentially works with finite multisets rather than the standard powerset. 

Future work includes extending to the case of countable non-determinism. This could use
the countable powerset functor, which is also definable as a HIT~\cite{quotientingDelay}.
We believe that the case of may-convergence generalises directly to the countable
case, but in the case of must-convergence the definition of $\mustcup$ as used here
requires deciding if all branches of a computation terminates. We believe this is a 
symptom of a much more fundamental problem, namely that the partiality monad of 
guarded recursion describes termination in finite steps, whereas the must-convergence
predicate for countable non-determinism requires more steps to reach a fixed point. 
Bizjak et al.~\cite{Bizjak:TLCA:2014} observe a similar
problem in the operational setting and solve it using a combination of $\top\top$-lifting
and transfinite induction in the underlying step-indexing model. It would be interesting 
to see if such an approach also applies to type theory.

Finally, it would be interesting to develop a general theory of combinations of algebraic 
effects such as state, exceptions, and non-determinism (as studied here) with guarded 
recursion. The domain theoretic counterparts of these effects are usually described 
algebraically using order-enriched theories~\cite{hyland2006discrete}, but as we have 
seen here, in the setting of guarded recursion the intensional information of the individual
steps allows us to describe the interaction of these effects with recursion in terms of 
ordinary equations. This theory could then give rise to a notion of guarded interaction 
trees~\cite{xia2019interaction}
which would allow also equations between computations across steps as well as 
guarded recursive definitions.

\textbf{Acknowledgements.} We thank the anonymous reviewers for many useful observations 
and suggestions.

\bibliographystyle{eptcs}
\bibliography{paper}
%\bibstyle{entcs}
%\bibliography{paper}

%% \appendix

%% \input{appendix}

\end{document}